\newcommand{\polylog}{\mathrm{polylog }}
\newlist{inlinelistRoman}{enumerate*}{1}
\setlist*[inlinelistRoman,1]{%
  label=(\Roman*),
}
\newlist{inlinelistroman}{enumerate*}{1}
\setlist*[inlinelistroman,1]{%
  label=(\roman*),
}
\newcommand\Algphase[1]{%
\vspace*{-.7\baselineskip}\Statex\hspace*{\dimexpr-\algorithmicindent-2pt\relax}\rule{\textwidth}{0.4pt}%
\Statex\hspace*{-\algorithmicindent}\textbf{#1}%
\vspace*{-.7\baselineskip}\Statex\hspace*{\dimexpr-\algorithmicindent-2pt\relax}\rule{\textwidth}{0.4pt}%
}
\newcommand{\LINEFORALL}[2]{%
	\State\algorithmicforall\ {#1}\ \algorithmicdo\ {#2} \algorithmicend\ \algorithmicfor%
}
\newenvironment{theorem-repeat}[1]{\begin{trivlist}
\item[\hspace{\labelsep}{\bf\noindent Theorem \ref{#1} }]\em }%
{\end{trivlist}}
\newenvironment{lemma-repeat}[1]{\begin{trivlist}
\item[\hspace{\labelsep}{\bf\noindent Lemma \ref{#1} }]\em }%
{\end{trivlist}}
\def\squareforqed{\hbox{\rlap{$\sqcap$}$\sqcup$}}
\def\qed{\ifmmode\squareforqed\else{\unskip\nobreak\hfil
		\penalty50\hskip1em\null\nobreak\hfil\squareforqed
		\parfillskip=0pt\finalhyphendemerits=0\endgraf}\fi}
\begin{document}
\frontmatter          
\pagestyle{headings}  
\mainmatter              
\title{On Fast and Robust Information Spreading\\ in the Vertex-Congest Model}

\author{Keren Censor-Hillel %
\and Tariq Toukan}
\institute{Technion - Israel Institute of Technology\\
\email{\{ckeren, ttoukan\}@cs.technion.ac.il}
}

\maketitle              

\begin{abstract}
This paper initiates the study of the impact of failures on the fundamental problem of \emph{information spreading} in the Vertex-Congest model, in which in every round, each of the $n$ nodes sends the same $O(\log{n})$-bit message to all of its neighbors.

\quad Our contribution to coping with failures is twofold. First, we prove that the randomized algorithm which chooses uniformly at random the next message to forward is slow, requiring $\Omega(n/\sqrt{k})$ rounds on some graphs, which we denote by $G_{n,k}$, where $k$ is the vertex-connectivity.

\quad Second, we design a randomized algorithm that makes dynamic message choices, with probabilities that change over the execution. We prove that for $G_{n,k}$ it requires only a near-optimal number of $O(n\log^3{n}/k)$ rounds, despite a rate of $q=O(k/n\log^3{n})$ failures per round. Our technique of choosing probabilities that change according to the execution is of independent interest.

\keywords{distributed computing, information spreading, randomized algorithms, vertex-connectivity, fault tolerance}
\end{abstract}

\section{Introduction}
\label{sec:intro}
Coping with failures is a cornerstone challenge in the design of distributed algorithms. It is desirable that a distributed system continues to operate correctly despite a reasonable amount of failures, and hence obtaining fault-tolerance has been a fundamental goal in this field. The impact of failures has been studied in various models of computation and for various distributed tasks.

In this paper, we initiate the study of robustness against failures of the task of information spreading in the Vertex-Congest model of computation.
Information spreading requires each node of the network to obtain the information of all other nodes. This problem is at the heart of many distributed applications which perform global tasks, and thus is a central issue in distributed computing (see, e.g.,~\cite{peleg00}).
The Vertex-Congest model, where in each round, every node generates an $O(\log{n})$-sized packet and sends it to \emph{all} of its neighbours, abstracts the behavior of wireless networks that operate on top of an abstract MAC layer~\cite{KLN11} that takes care of collisions.

The time required for achieving information spreading depends on the structure of the communication graph. Even without faults, it is clear that 
having a minimum vertex-cut of size $k$ implies an $\Omega(n/k)$ lower bound for the running time of any algorithm in the above model, and hence
our study addresses the $k$-vertex-connectivity of the graph.
The diameter of a graph is a trivial lower bound on the number of rounds required for spreading even without faults, and hence, for $k$-vertex-connected graphs, $\Omega(n/k)$ is a general lower bound as there exist $k$-vertex-connected graphs of diameter $n/k$.

A tempting approach would be to use randomization for choosing which message to forward in each round of communication, in the hope that this would be naturally robust against failures. However, we show that the uniform randomized algorithm is slow on a $k$-vertex-connected family of graphs, denoted $G_{n,k}$, which consists of $n/k$ cliques of size $k$ that are connected by perfect matchings, requiring $\Omega(n/\sqrt{k})$ rounds.

Instead, this paper presents an algorithm for spreading information in the Vertex-Congest model that uses dynamic probabilities for selecting the messages to be sent in each round. We prove that for $G_{n,k}$, the round complexity of our algorithm is almost optimal and that it is highly robust against node failures.

\vspace{-10pt}
\subsection{Our Contribution}
As explained, our first contribution is proving that the intuitive idea of simply choosing at random which message to forward is not efficient.
The proof is based on the fact that there is an inverse proportion between the number of received messages in a node and the probability of a message in that node to be chosen and forwarded.
The larger the number of messages received in the nodes of a clique, the longer it takes for any newly received message to be forwarded to the nodes of the next clique.
The full proof appears in appendix.
\begin{theorem}
	\label{thm:uniform_rand_slow}
	The uniform random algorithm requires $\Omega(n/\sqrt{k})$ rounds on $G_{n,k}$, in expectation.

\end{theorem}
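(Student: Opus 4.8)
The plan is to follow a single token across the whole chain of cliques and show that it is forced to move slowly through each clique. Write $G_{n,k}$ as cliques $C_1,\dots,C_L$ with $L=n/k$, consecutive cliques joined by a perfect matching, so that each vertex has $k-1$ neighbours inside its clique, one in the previous clique, and one in the next. Fix $v\in C_1$ and track its token $x_v$; let $\tau_j$ be the first round in which some vertex of $C_j$ knows $x_v$. Information spreading cannot be complete before round $\tau_L$, so it suffices to prove $\mathbb{E}[\tau_L]=\Omega(n/\sqrt{k})$, and by linearity of expectation it is enough to show $\mathbb{E}[\tau_{j+1}-\tau_j\mid\mathcal{F}_{\tau_j}]=\Omega(\sqrt{k})$ for each $j$, where $\mathcal{F}_t$ denotes the execution history through round $t$; summing the $L-1$ crossings then finishes the argument.

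For one crossing, observe that when $x_v$ first enters $C_j$ it is held by a single vertex, and $\tau_{j+1}$ is reached only once some vertex of $C_j$ that currently holds $x_v$ actually selects $x_v$ to forward in a round, thereby delivering it to its matching partner in $C_{j+1}$. Two effects slow this down. First, using the inverse proportion noted in the introduction between the number of messages a node holds and the probability it forwards any particular one: every vertex floods its own token in round $1$, so after one round every vertex of $C_j$ has already received $\Omega(k)$ distinct messages, hence throughout the relevant window its set of messages still to be forwarded has size $\Omega(k)$ and it forwards any given message with probability $O(1/k)$ per round. Second, during this window a new vertex of $C_j$ can learn $x_v$ only from its unique partner in $C_{j-1}$, and each such partner forwards $x_v$ at most once and again only with probability $O(1/k)$ per round; hence the number $A_r$ of vertices of $C_j$ holding $x_v$ at round $\tau_j+r$ satisfies $\mathbb{E}[A_r-A_{r-1}\mid\mathcal{F}_{\tau_j+r-1}]=O(1)$, so $\mathbb{E}[A_r]=O(r)$. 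Combining, the chance that the crossing happens exactly at round $\tau_j+r$ is $O(\mathbb{E}[A_{r-1}]/k)=O(r/k)$, so the chance it happens within $s$ rounds is $O(s^2/k)$; taking $s=c\sqrt{k}$ for a small constant $c$ makes this at most $1/2$, whence $\mathbb{E}[\tau_{j+1}-\tau_j]=\Omega(\sqrt{k})$. Summing, $\mathbb{E}[T]\ge\mathbb{E}[\tau_L]\ge(L-1)\cdot\Omega(\sqrt{k})=\Omega\!\big((n/k)\sqrt{k}\big)=\Omega(n/\sqrt{k})$.

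The step I expect to be the main obstacle is verifying that every vertex's pending‑message set really stays of size $\Omega(k)$ for long enough at each clique: for cliques deep in the chain the round‑$1$ backlog could in principle have drained by the time $x_v$ arrives. I would resolve this by a bootstrapping/case split — either the execution already lasts $\Omega(n/\sqrt{k})$ rounds and we are done, or it is short, in which case every clique must be absorbing new tokens at an average rate of $\Omega(\sqrt{k})$ per round (it has to collect all $n$ tokens within that short time), which replenishes the pending sets and feeds the per‑crossing estimate above. The second delicate point is that all the choices are adaptive, so the estimate on $A_r$ and the union bounds must be carried out conditionally on the history (e.g.\ via a supermartingale / optional‑stopping formulation) rather than pretending the per‑round increments are independent; and one should note that $C_L$ being itself a source is irrelevant here, since the bound only concerns propagation of $x_v$ into $C_L$.
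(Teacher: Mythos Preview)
Your decomposition—summing $n/k$ per-clique crossing times of $\Omega(\sqrt k)$ each—is a genuinely different route from the paper's, which extracts a single $\Omega(n/\sqrt k)$ delay localized at the last clique $C_{n/k}$. The underlying mechanism is the same quadratic tail estimate $\Pr[\text{cross within }s]=O(s^2/B)$ with $B$ the pending-buffer size, just run at a different scale: you take $B=\Theta(k)$ and $s=\Theta(\sqrt k)$ at every clique, while the paper waits until buffers in $C_{n/k}$ reach $n/2$ (which must eventually happen, since all $n$ tokens arrive there) and then runs the estimate once with $B=\Theta(n)$ and $s=\Theta(n/\sqrt k)$. The paper's localization buys it precisely the step you flag as the main obstacle: it never has to argue that buffers stay large anywhere except at one clique at one moment, and that moment is chosen so that they are by definition.

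Your proposed patch for the drainage problem does not close the gap. The sentence ``has already received $\Omega(k)$ distinct messages, hence \ldots\ its set of messages still to be forwarded has size $\Omega(k)$'' conflates received with pending: the pending set is received minus sent, and by round $t$ a node has sent $t$ messages, so once $t\gg k$ the round-1 backlog contributes nothing. The bootstrapping case split yields only an \emph{average} inflow of $\Omega(\sqrt k)$ tokens per round under a short execution; this does not preclude the pending set from dipping below $ck$ at exactly the rounds where your crossing estimate needs it, and conditioning on the execution being short also biases the per-round sending probabilities in a way the crossing analysis would have to absorb. To make your route go through you would need a pointwise lower bound on $|R_u(t)|-t$ (or a stochastic domination) valid throughout the transit of $x_v$ across all cliques, which is nontrivial; the paper sidesteps this entirely by anchoring at $C_{n/k}$ with guaranteed $\Omega(n)$ buffers.
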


Our main result is an algorithm in which the probabilities for sending messages in each round are not fixed, but rather change dynamically during the execution based on how it evolves. Roughly speaking, the probability of sending a message is set according to the number of times it was received, with the goal of giving higher probabilities for less popular messages. The key intuition behind this approach is that nodes can take responsibility for forwarding messages that they receive few times, while they can assume that messages that have been received many times have already been forwarded throughout the network. This way, we aim to combine qualities of both random and static approaches, obtaining an algorithm that is both fast and robust.

This basic approach alone turns out to be insufficient. It allows each message to be sent fast through multiple paths in the network, but it requires an additional mechanism in order to be robust against failures.
Our next step is to augment our algorithm with some additional rounds of communication that allow the paths to change dynamically as the execution unfolds, essentially bypassing faulty nodes. These \emph{shuffle phases} provide fault-tolerance while retaining the efficiency of the algorithm.
We consider a strong failure model, in which links are reliable but nodes fail independently with probability $q$ per round and never recover, and prove the following result, which holds with high probability\footnote{We use the phrase ``with high probability'' (w.h.p.) to indicate that an event happens with probability at least $1-\frac{1}{n^c}$ for a constant $c\geq1$.}.
\begin{theorem-repeat}{thm:main2}
	\autoref{alg:shuffle} completes full information spreading on $G_{n,k}$ in $O\left(\frac{n}{k}\log^3 n\right)$ rounds, for any node failure probability per round $q$, $0\leq q \leq O\left(\frac{k}{n \log^3 n}\right)$, w.h.p.
\end{theorem-repeat}

While our algorithm is general and does not assume any knowledge of the topology of the network, showing that it is fast and robust for $G_{n,k}$ is important as this graph is basically a $k$-vertex-connected generalization of a simple path. This constitutes a first step towards understanding this key question.
By making minor changes to $G_{n,k}$ we can cover additional graphs with same or similar analysis.
We believe that the same approach works for additional families of $k$-vertex-connected graphs.

\vspace{-10pt}
\subsection{Additional Background and Related Work}
One approach for disseminating information that was introduced in \cite{ahlswede2000network} and has been intensively studied (e.g. \cite{li2003linear, ho2003benefits, deb2006algebraic, mosk2006information}) is \emph{network coding}. Instead of simply relaying the packets they receive, the nodes of a network take several packets and combine them together for transmission. An example is \emph{random linear network coding (RLNC)} presented in \cite{ho2006random}. Among its advantages is improving the network's throughput \cite{ho2003benefits}. A conclusion that can be derived from the analysis shown in \cite{haeupler2011analyzing}, is that RLNC spreads the information in $\Theta(n/k)$ rounds, w.h.p.

However, network coding requires sending large coefficients, which do not fit within the restriction on the packet size that is imposed in the Vertex-Congest model.
An additional disadvantage is derived from the fact that decoding is done by solving a system of linear independent equations of $n$ variables, one variable for each of the original messages.
Thus, the decoding process requires the reception of a \emph{sufficient number} of packets by the node, in order to start reproducing the original information. Unfortunately, in most cases, this sufficient number of packets equals the number of original messages, which means that decoding happens only at the end of the process. This issue has supreme importance in applications of broadcasting videos or presentations. For example, when watching online content, one would prefer displaying the downloaded parts of an image immediately on the screen, rather than waiting with an empty screen until the image is fully downloaded.

An almost-optimal algorithm that requires $O(n\log{n}/k)$ rounds with high probability has been shown in~\cite{CHGK14b}. It is based on a preprocessing stage which constructs vertex-disjoint connected dominating sets (CDSs) which are then used in order to route messages in parallel through all the CDSs. However, this algorithm is non-robust for the following reason.
In the basic algorithm the failure of a single node in a CDS suffices to render the entire structure faulty.
This sensitivity can be easily fixed by combining $O(\polylog(n))$ CDSs together into well-connected components and sending information redundantly over each CDS in the component, incurring a cost of only an $O(\polylog(n))$ factor of slowdown in runtime. Nevertheless, the \emph{construction} itself, of the CDS packings, is highly sensitive to failures. It is an important open problem whether CDS packings can be constructed under faults.

Randomized protocols were designed to overcome similar problems of fault-tolerance in various settings \cite{elsasser2009cover, feige1990randomized}, as they are naturally fault-tolerant. The approach taken in this paper, of changing the probabilities of sending messages according to how the execution evolves such that they are inversely proportional to the number of times a message has been received, bears some resemblance and borrows ideas from~\cite{censorgiakkoupis2012fast}, where a fault-tolerant information spreading algorithm was designed
for gossiping, which is a different model of communication.
Apart from the high-level intuition, the model of communication and the implementation and analysis are completely different.

\subsection{Preliminaries}
We assume a network with $n$ nodes that have unique identifiers of $O(\log n)$~bits.
Each node $u$ holds one message, denoted $m_u$.
An \emph{information spreading} algorithm distributes the messages of each node in the network to all other nodes.

In the \emph{Vertex-Congest} model, each node knows its neighbours but does not know the global graph topology.
The execution proceeds in a sequence of synchronous rounds.
In each round, every node generates a packet and sends it to \emph{all} of its neighbours.
The packet size is bounded by $O(\log{n})$ bits and can encapsulate one message, in addition to some header.

An $n$-node graph is said to be \emph{$k$-vertex-connected} if the graph resulting from deleting any (perhaps empty) set of fewer than $k$ vertices remains connected.
In this paper we assume that $k=\omega(\log^3 n)$.
An equivalent definition \cite{menger1927allgemeinen} is that a graph is $k$-vertex-connected if for every pair of its vertices it is possible to find~$k$ vertex-disjoint paths connecting these vertices.

We consider a strong failure model, in which links are reliable but nodes fail independently with probability $q$ per round and never recover.

\section{A Fast Information Spreading Algorithm}
\label{sec:alg}
In this section, we describe our basic information spreading algorithm. We emphasize that the algorithm does not assume anything about the underlying graph, except for a polynomial bound on its size. In particular, the nodes do not know the vertex-connectivity of the graph, nor any additional information about its topology.
Each node $u$ has a set of received messages, whose content at the beginning of round~$t$ is denoted $R_u(t)$.
We use $cnt_{u,v}(t)$ to denote the number of times a node $u$ has received message $m_v$ by the beginning of round $t$.
Denote by $S_u(t)$ the set of messages sent by node $u$ by the beginning of round $t$.
Define $B_u(t)\equiv R_u(t)-S_u(t)$, the set of messages that are known to node $u$ at the beginning of round $t$, but not yet sent.
We refer to~$B_u(t)$ as a logical variable, whose value changes implicitly according to updates in the actual variables $R_u(t)$ and $S_u(t)$.
For every node $u$, we have that $S_u(0)=\emptyset$, $R_u(0)=\{m_u\}$, $cnt_{u,u}(0)=1$, and for each $v\neq u, cnt_{u,v}(0)=0$.

We present an algorithm, \autoref{alg:1}, that consists of two types of phases: a random phase and ranking phases (see \autoref{fig:phases}).
Let $t_{0}$ be the round number at the beginning of the random phase,
and let $\bar{t}_{0}$ be the round number after the random phase.
Let $t_{p}$ be the round number at the beginning of ranking phase~$p$,
and let $\bar{t}_{p}$ be the round number after ranking phase $p$, starting from~$p=1$.
In this algorithm, it holds that $\bar{t}_{p}=t_{p+1}$ for every $p$, and $t_0=1$.
We will later modify this algorithm in \autoref{sec:alg_ft}, where we argue about properties that hold in $\bar{t}_{p}$ and~$t_{p+1}$, separately.
Denote by $\hat{B}_u(t_p)$ the set of node $u$ at time~$t_p$.
Unlike~$B_u(t)$, $\hat{B}_u(t)$ is an actual variable that does not implicitly change according to $R_u(t)$ and $S_u(t)$.
We assign a value to it at the beginning of every phase, that is, $\hat{B}_u(t_p)=B_u(t_p)$, and make sure that its content only gets smaller during a phase.
The parameters $\alpha$ and $d$ are constants that are fixed later, at the end of~\autoref{sec:time_analysis}.
The algorithm runs as follows,
where in each round every node sends a message and receives messages from all of its neighbors:
\begin{enumerate}[label=(\arabic*)]
	\item Single round (Round 0): This is the first round of the algorithm, where every node~$u$ sends the message $m_u$ it has.
	\item Random phase: This is the first phase of the algorithm, which consists of~$\tau=\alpha\log n$ rounds.
	In each round $t$, every node $u$ picks a message to send from $\hat{B}_u(t_0)$ uniformly at random, and removes it from the set.
	\item Consecutive ranking phases: Each of these phases consists of $\tau^\prime=8 d \tau\log^{2} n$ rounds.
	At the beginning of such a phase, each node uses the Ranking Function (\autoref{alg:ranking}) that defines a probability space over the messages in $\hat{B}_u(t_p)$.
	In each round, every node $u$ picks a message to send from $\hat{B}_u(t_p)$ according to the probability space, and removes it from the set.
\end{enumerate}

\begin{algorithm}[htb!]
	\caption{for each node $u$}
	\label{alg:1}
	\begin{algorithmic}[1]
		\State \Call{SyncRound}{$m_u$} \Comment{Round 0}
		\State RandomPhase()
		\Loop
		\State RankingPhase()
		\EndLoop

		\Algphase{SyncRound($m$)}
		\Procedure{SyncRound}{$m$} \Comment{A synchronized round}
		\State send($m$)
		\State $S_u(t) \gets S_u(t) \cup \{m\}$
		\State $R \gets$ received messages
		\ForAll{$m_v \in R$}
		\State $R_u(t) \gets R_u(t) \cup \{m_v\}$
		\State $cnt_{u,v}(t) \gets cnt_{u,v}(t) +1$
		\EndFor
		\State $t \gets t+1$
		\EndProcedure

		\Algphase{RandomPhase}
		\State $\hat{B}_u(t_0) \gets B_u(t)$ \Comment{$t=t_0$}
		\Loop \textbf{ $\tau$ times}
		\Comment{$\tau = \alpha\log n$}
		\State $m \gets $ pop message from $\hat{B}_u(t_0)$ uniformly at random
		\State \Call{SyncRound}{$m$}
		\EndLoop
		
		\Algphase{RankingPhase $p$}
		\State $\hat{B}_u(t_{p}) \gets B_u(t)$ \Comment{$t=t_{p}$}
		\State $Prob \gets \Call{RankingFunction}{\hat{B}_u(t_{p})}$

		\Loop \textbf{ $\tau^\prime$ times} \Comment{$\tau^\prime=8 d \tau\log^{2} n$}
		\State $m \gets $ pop message from $\hat{B}_u(t_{p})$ according to $Prob$
		\State Nullify $Prob[m]$ (update $Prob$ accordingly)
		\State \Call{SyncRound}{$m$}
		\EndLoop
		
	\end{algorithmic}
\end{algorithm}

\vspace{-10pt}
\subsubsection{Ranking Function.}
The ranking function (in~\autoref{alg:ranking}) is calculated by each node, and defines a probability space over its messages.
Each node $u$ sorts the messages in $\hat{B}_u$ according to their $cnt$ values, smallest to largest, breaking ties arbitrarily.
Denote by $rank_m$ the position of the message $m$ within the sorted list,
and let $b=|\hat{B}_u|$, be the size of the list.
We consider the probability space in which the probability for a message $m$ with $rank_m=r$ to be picked is~$\frac{1}{r H_b}$.
Namely, the probability is inversely proportional to $r$.
The $b$-th harmonic number, $H_b=\sum_{i=1}^b 1/i$, is a normalization factor (over the whole list of messages).
This means that messages in lower positions (lower $rank_m$ values, implying lower $cnt$ values) are more likely to be picked.

\begin{figure}[htb!]
\begin{algorithmic}[1]
	\Function{RankingFunction}{Buffer $\hat{B}_u$}
	\State $mList\gets$ sort $\hat{B}_u$ increasingly according to $cnt$ values
	\State $b \gets$ length($mList$)
    \LINEFORALL{$1\leq r\leq b$}{$Prob[mList[r]] \gets \frac{1}{rH_b}$}
	\State \Return $Prob$
	\EndFunction
\end{algorithmic}
\caption{The Ranking Function}
\label{alg:ranking}
\end{figure}

Other interesting variants of probability distributions over the messages might work as well. For example, the inverse proportion might be raised to some exponent, and be a function of the $cnt$ values instead of the ranking $r$. Our ranking function was selected as it is very simple, and fits perfectly in~\autoref{lemma:ranking}.
In the algorithm, the probability space used by a node $u$ during a phase is calculated at the start of the phase.
In ranking phases, it is defined according to the~Ranking function.
In the random phase, it is the uniform distribution.
Within a phase, the only modifications in the probability space of a node are done due to the \emph{non-repetitive} sending policy\footnote{There is no point in re-sending messages, as all links are reliable.}, i.e., the need for nullifying probabilities of messages that are already sent.
When a message is sent, the modification can be done, for example, by updating the normalization factor, or alternatively by distributing the probability of the sent message between all other messages (say, proportionally to their current probabilities).
Anyhow, this implies that the probability of each message can only get larger during a phase, as long as it is not sent. Namely, the initial probability of a message (at the beginning of a phase) is a lower bound on its probability for the rest of the phase (as long as it is not sent).
Probabilities are not defined for messages that were not known at the start of a phase, and were first received during the phase, thus these messages have no chance of being sent until the next phase starts.
\vspace{-10pt}
\subsubsection{The Phase Separation Property.}
Changes in $cnt$ values during a phase (due to reception of messages) do not affect the probability space of this phase, as it is calculated only at the start of each phase.
This implies that messages that are first received by a node after the start of the random phase or a ranking phase have zero probability for being sent during that phase, and can be sent by the node only starting from the next phase, when the probability space is recalculated.
We call this \emph{the phase separation property}, and it implies the following:
\vspace{-10pt}
\begin{proposition}
	\label{proposition:p_distance}
	At the start of ranking phase $p$, every message has propagated to a distance of at most $p+1$.
\end{proposition}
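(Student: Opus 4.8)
The plan is to prove the statement by induction on $p$, using the phase separation property as the mechanism that advances the propagation front by exactly one hop per phase. I interpret ``message $m_v$ has propagated to distance at most $\ell$'' as the assertion that for every node $u$ with $m_v \in R_u$ at the round in question, the graph distance between $u$ and $v$ is at most $\ell$. The goal is thus to show that at round $t_p$ only nodes within distance $p+1$ of $v$ hold $m_v$.

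For the base case $p=1$ I would track the first two stages of \autoref{alg:1}. After Round~0 every node $u$ has sent $m_u$, so $R_u(t_0)$ consists exactly of $m_u$ together with the messages of $u$'s neighbours; hence $m_v \in R_u(t_0)$ implies the distance from $u$ to $v$ is at most $1$, and moreover $\hat{B}_u(t_0)=B_u(t_0)$ contains only such messages. During the random phase each node $u$ sends only messages popped from $\hat{B}_u(t_0)$, and by the phase separation property any message received during the random phase is never forwarded within it. Consequently every message forwarded during the random phase already lies within distance $1$ of its source before being relayed, so after the random phase — i.e.\ at the start of ranking phase~$1$ — it has reached distance at most $2=p+1$.

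For the inductive step, assume the claim holds at the start of ranking phase $p$, so every $m_v$ is held only by nodes within distance $p+1$ of $v$. During ranking phase $p$ each node $u$ only pops and sends messages from $\hat{B}_u(t_p)=B_u(t_p)\subseteq R_u(t_p)$, all of which, by the induction hypothesis, have a source within distance $p+1$ of $u$; relaying such a message moves it to nodes at distance at most $p+2$ from the source. The phase separation property guarantees that any message a node receives during ranking phase $p$ (including in its very first round) has zero probability of being sent before ranking phase $p+1$, so no message can make two hops of progress within one phase. Hence at the start of ranking phase $p+1$ every message has propagated to distance at most $(p+1)+1$, completing the induction.

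The argument is essentially bookkeeping, so I do not anticipate a deep obstacle; the points that need the most care are making the phase separation property airtight — in particular, that $\hat{B}_u(t_p)$ is a snapshot taken at $t_p$ whose content only shrinks during the phase, so a message entering $R_u$ mid-phase cannot become eligible for sending — and handling the Round~0 / random-phase boundary cleanly in the base case, while ensuring the distance accounting never silently loses a hop when a message is relayed by a node sitting at the maximal permitted distance from the source.
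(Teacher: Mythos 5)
Your proof is correct and follows exactly the reasoning the paper intends: the paper offers no explicit proof, presenting the proposition as an immediate consequence of the phase separation property, and your induction (distance $1$ after Round~0, one additional hop per phase because each node only ever sends from the snapshot $\hat{B}_u(t_p)$ taken at the start of the phase) is just the careful write-up of that same observation. No gaps.
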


The following lemma holds for any node and for a general graph. Its proof appears in appendix.
\begin{lemma}
	\label{lemma:ranking}
	Let $m$ be a message with rank $r\leq 8 \tau$ (recall that $\tau=\alpha\log n$), then $m$ is sent during the ranking phase with probability at least $1-n^{-d}$.
\end{lemma}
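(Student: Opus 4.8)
Let me think about this carefully. We have a message $m$ in node $u$'s buffer $\hat B_u(t_p)$ at the start of ranking phase $p$, with $rank_m = r \le 8\tau$. The phase lasts $\tau' = 8d\tau\log^2 n$ rounds. In each round, one message is popped from the buffer according to the probability space, which initially assigns probability $\frac{1}{r' H_b}$ to the message currently at rank $r'$. We want: the probability $m$ is never chosen over all $\tau'$ rounds is at most $n^{-d}$.

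The key observations to assemble: first, the buffer only shrinks during a phase (messages first received mid-phase get zero probability — the phase separation property), so $b = |\hat B_u(t_p)|$ is an upper bound on the buffer size throughout, hence $H_b \le H_n = O(\log n)$. Second, and crucially, as messages are sent and removed, the rank of $m$ among the remaining messages can only decrease or stay the same, so its rank is always $\le r \le 8\tau$. Third, the non-repetitive sending policy redistributes probability only away from sent messages toward the survivors, so the probability assigned to $m$ in any round (as long as it is unsent) is at least its initial value $\frac{1}{r H_b} \ge \frac{1}{8\tau \cdot H_b} = \frac{1}{8\alpha\log n \cdot O(\log n)} = \Omega(1/\log^2 n)$.

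With that lower bound $\pi := \frac{1}{8\tau H_b}$ on the per-round selection probability of $m$, the events across rounds are not independent, but we only ever use the lower bound, so we can bound the probability that $m$ is unsent after $\tau'$ rounds by $(1-\pi)^{\tau'} \le e^{-\pi \tau'}$. Plugging in $\tau' = 8d\tau\log^2 n$ and $\pi \ge \frac{1}{8\tau H_b} \ge \frac{1}{8\tau \cdot c\log n}$ for the constant $c$ with $H_b \le c\log n$, we get $\pi\tau' \ge \frac{8d\tau\log^2 n}{8c\tau\log n} = \frac{d}{c}\log n$, so the failure probability is at most $e^{-(d/c)\log n} = n^{-d/c}$. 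Choosing the constant $d$ large enough relative to $c$ (or, as the text says, fixing $\alpha,d$ appropriately at the end of Section~\ref{sec:time_analysis}) gives the claimed bound $1 - n^{-d}$; I would state it with the constant absorbed so that the exponent is exactly $d$, or note the cleaner route of defining $\tau'$ with the right constant so the arithmetic closes on the nose.

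The main obstacle — really the only subtle point — is justifying that the per-round probability of $m$ stays at least $\frac{1}{r H_b}$ throughout the phase, i.e.\ carefully invoking both (a) rank monotonicity under deletions and (b) the fact that the nullify/renormalize step only increases surviving messages' probabilities. Both are asserted in the discussion preceding the lemma ("the initial probability of a message ... is a lower bound on its probability for the rest of the phase"), so the proof mostly needs to cite that and chain it with the union-free $(1-\pi)^{\tau'}$ bound; no delicate concentration argument is needed since we never need an upper bound on anything. I would also remark that if $b < r$ the statement is vacuous or $m$ is sent trivially, so WLOG $b \ge r$, making $H_b$ well-defined and the rank legitimate.
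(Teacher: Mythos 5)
Your proposal is correct and follows essentially the same route as the paper's proof: lower-bound the per-round selection probability of $m$ by its initial value $\frac{1}{rH_b}$ (using rank monotonicity and the fact that nullification only raises surviving probabilities), then bound the failure probability by $\left(1-\frac{1}{rH_b}\right)^{\tau'}$ with $r\le 8\tau$ and $H_b=O(\log n)$. The only difference is bookkeeping at the very end: the paper closes the arithmetic exactly on $n^{-d}$ by using $H_b\le \ln b+1\le \log_2 b\le \log_2 n$ together with $(1-1/x)^x<1/2$, whereas you leave a constant $c$ in the exponent and absorb it into the choice of $d$, which is harmless since $d$ is fixed only later.
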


\section{Time Analysis for $G_{n,k}$}
\label{sec:time_analysis}
Recall that $G_{n,k}$ is the graph that consists of $n/k$ cliques of size $k$ (assume $n/k$ is an integer), with a matching between every two consecutive cliques (see \autoref{fig:g_nk} in appendix).
Clearly, $G_{n,k}$ is $k$-vertex-connected.

\vspace{-10pt}
\subsubsection{Additional Definitions.}
Denote by $\mathcal{C}$ the set of all cliques.
Recall the enumeration of the cliques,
and denote by $C_i$ clique number $i$, $i\in\{1,\ldots,\frac{n}{k}\}$.
Denote by $C(u)$ the clique that contains node $u$.
A \emph{layer} $L$ is a set of $n/k$ nodes from all distinct cliques that form a path starting in $C_1$ and ending in $C_{n/k}$. We denote by $\mathcal{L}$ the set of all $k$ layers.
The layer $L(u)\in \mathcal{L}$ is the layer that contains node $u$.
Notice that within the same clique, different nodes belong to different layers.

We now analyze the time complexity of the algorithm to spread information over $G_{n,k}$.
For simplicity, we analyze the flow of messages from $C_j$ to $C_i$, where~$j\leq i$.
The opposite direction of flow and its analysis are symmetric.

\begin{theorem}
	\label{th:main1}
	\autoref{alg:1} completes full information spreading on $G_{n,k}$ in $O\left(\frac{n}{k}\log^3 n\right)$ rounds, w.h.p.
\end{theorem}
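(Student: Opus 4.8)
The plan is to reduce \autoref{th:main1} to a bound on the \emph{number of ranking phases}, and then establish that bound by an induction that follows, for one fixed message, how a thin ``frontier'' of messages advances essentially one clique per ranking phase.

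\textbf{Reduction.} Round~$0$ and the random phase together take $\tau+1=O(\log n)$ rounds, and each ranking phase takes $\tau'=8d\tau\log^2 n=\Theta(\log^3 n)$ rounds, so it suffices to prove that $O(n/k)$ ranking phases complete information spreading w.h.p. Since the flow toward higher-indexed cliques is symmetric to the flow toward lower-indexed ones, and \autoref{alg:1} is oblivious to message contents, it suffices — by a union bound over the $n$ messages and the two directions — to fix one message $m_v$ originating in clique $C_j$ and show that, w.h.p., for every $i\ge j$ the message $m_v$ is known to all of $C_i$ by the start of ranking phase $i-j+O(1)$. The total round count is then $(\tau+1)+O(n/k)\cdot\tau'=O\!\big(\tfrac{n}{k}\log^3 n\big)$.

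\textbf{The invariant.} Let $u_i$ be the node of $C_i$ lying on the layer $L(v)$ of $v$ (so $u_j=v$). I will prove, by induction on $p\ge 1$ and simultaneously for every origin clique, the following statement about the start of ranking phase $p$: (i)~every node of $C_j,\dots,C_{j+p-1}$ knows $m_v$; (ii)~$u_{j+p}$ knows $m_v$; and (iii)~$u_{j+p}$ has received $m_v$ exactly once, and more generally every node $w\in C_{j+p}$ has received only $O(\tau)$ of the messages currently in $\hat B_w(t_p)$ exactly once. Claim~(iii) is where the geometry of $G_{n,k}$ is essential: because distinct nodes of a clique lie on distinct layers, the unique shortest path from $v$ to $u_{j+p}$ runs along $L(v)$, so by \autoref{proposition:p_distance} and the phase separation property $u_{j+p}$ can have learned $m_v$ only through its single matched neighbour $u_{j+p-1}$ and only during the previous phase — hence it was received exactly once. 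Running the same argument over all messages shows that the messages $w\in C_{j+p}$ has received exactly once at $t_p$ are precisely those that first reached $w$ during phase $p-1$; these originate in only $O(1)$ cliques (those at ``phase-distance'' about $p$ from $C_{j+p}$ on either side), and for each of them the number of its messages that have reached $w$ — the width of that clique's frontier — is $O(\tau)$, which is itself part of the induction; messages from closer cliques have circulated inside $C_{j+p}$ for at least a phase and have count $\ge 2$ w.h.p., while messages from farther cliques have not yet arrived. The base case $p=1$ is checked directly: after round~$0$ every node of a clique adjacent to $C_j$ knows exactly one message of $C_j$, during the random phase it learns at most $\tau$ more from its matched neighbour and, in expectation and w.h.p., $O(\tau)$ more from its clique-mates (here $k=\omega(\log^3 n)$ keeps $\tau/k$ small), whereas by a Chernoff bound every message of its \emph{own} clique is re-sent to it $\Theta(\tau)\gg 1$ times — so the count-$1$ frontier is both $O(\tau)$ wide and strictly separated from all older messages.

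\textbf{Inductive step and the obstacle.} Given the invariant, sort $\hat B_{u_{j+p}}(t_p)$ by $cnt$ values: $m_v$ has count~$1$ and there are only $O(\tau)$ messages of count~$1$, so $rank_{m_v}\le O(\tau)\le 8\tau$ once $\alpha$ is fixed. By \autoref{lemma:ranking}, $u_{j+p}$ sends $m_v$ during ranking phase $p$ with probability $1-n^{-d}$ (if it has not already sent it, in which case $m_v$ is only ahead of schedule); when it does, all of $C_{j+p}$ receives $m_v$ (a clique) and so does $u_{j+p+1}$, which re-establishes (i)--(iii) for $p+1$ once one checks, by the same unique-shortest-path bookkeeping, that no new large-rank count-$1$ message appears and the frontier width is still $O(\tau)$. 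Running this to $p=n/k$ places $m_v$ in every clique, and the union bounds needed — over the $O(\tau)$ frontier messages per clique, the $O(n/k)$ cliques, the $O(n/k)$ phases and the $n$ origin cliques, i.e.\ $\mathrm{poly}(n)$ events — go through once $\alpha$ and $d$ are fixed large enough (this is where the constants of \autoref{sec:time_analysis} are pinned down). The main obstacle is exactly the control of the $cnt$ (hence $rank$) values in claim~(iii): showing that a freshly arrived message has strictly smaller count than every message that has already circulated in a clique for a full phase, and that the per-clique frontier never widens past $O(\log n)$. This is where the unique-shortest-path structure of $G_{n,k}$, Chernoff bounds on the random phase, and a careful propagation of the invariant through the inductive step all come in; everything else is routine counting and union bounds.
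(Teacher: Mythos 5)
Your reduction (count ranking phases, then union-bound over messages and directions) and your overall plan (an induction keeping the per-clique frontier of low-count messages at width $O(\tau)$ so that \autoref{lemma:ranking} applies) match the paper's strategy, which proves the per-message statement as \autoref{lemma:iteration} via four simultaneous inductive properties. However, there is a genuine gap at exactly the point you flag as the main obstacle. Your invariant tracks a \emph{single} carrier $u_{j+p}\in C_{j+p}\cap L(v)$ per message per clique, and your inductive step asserts that messages which have already circulated in $C_{j+p}$ for a phase ``have count $\ge 2$ w.h.p.'' But the only mechanism that raises a message's count at a node is that \emph{other} nodes of the same clique send it, and a one-carrier-per-message invariant guarantees only one in-clique sender per message per phase. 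Under that dynamic, at the start of phase $p+1$ a typical node of $C_{j+p}$ holds $\Theta(k)$ messages of $C_j$ with count at most $2$ (each received only from its unique carrier, plus possibly once from the matched neighbour), and $k=\omega(\log^3 n)\gg 8\tau$; so the count-$1$ frontier does \emph{not} stay $O(\tau)$ wide, the rank bound $r\le 8\tau$ fails, and \autoref{lemma:ranking} no longer yields that the next frontier message is sent.

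What is missing is the paper's Property~3: every frontier message $m_v$, $v\in C_{i-p}$, is fresh --- count below $T=\tau/2$ --- at \emph{at least $T$ nodes} of $C_i$ simultaneously at time $t_p$. Since all $\ge T$ of these carriers send $m_v$ during ranking phase $p$ (\autoref{lemma:h_send}), every node of $C_i$ receives $m_v$ at least $T$ times, pushing it above the freshness threshold and retiring it from the low-rank region in all later phases; this is what keeps each node's fresh set at size $\le 8\tau$ (Property~2) and separates ``new'' from ``old'' by the robust margin $T$ rather than the fragile count-$1$/count-$2$ distinction (note a legitimate frontier message can already have count $2$ at time $t_p$, once from the in-clique pioneer and once from the matched neighbour, so ``received exactly once'' is not even the right classifier). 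The $\ge T$ multiplicity is seeded by a Chernoff bound on the random phase, as in your base case, but it must then be \emph{propagated} through every inductive step, which your claims (i)--(iii) do not do. Adding Property~3 (together with the pioneer bound of \autoref{lemma:rand_pioneer} to control the $O(\tau)$ stragglers one clique ahead of the frontier) to your invariant closes the argument essentially as in the paper.
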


The theorem is directly proved based on~\autoref{lemma:iteration}, as follows.

\begin{lemma}[Iteration]
	\label{lemma:iteration}
	For every $i, 1\leq i\leq\frac{n}{k}$, every node $u\in C_i$, and every node $v$ such that $v\in C_j$ for some $i-p\leq j \leq i$,
	it holds that $m_v\in R_u(\bar{t}_{p})$, w.h.p.
\end{lemma}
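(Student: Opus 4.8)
The plan is to prove the lemma by induction on $p$, strengthening it so that the induction can be carried. The mechanism I would isolate is: in ranking phase $p$, the messages that have \emph{just} reached a clique --- so have been received only once, hence have rank at most $8\tau$ and, by \autoref{lemma:ranking}, are forwarded w.h.p.\ --- get pushed along the matching edges leaving that clique and flooded inside it, so that in each ranking phase every message advances exactly one clique in each direction. Concretely, I would prove by induction on $p$, simultaneously for all cliques and messages, a two-sided strengthening that holds w.h.p.\ at round $\bar t_p=t_{p+1}$: (i) for every $v\in C_\ell$, every node of $C_{\ell-p},\dots,C_{\ell+p}$ knows $m_v$, and moreover the node on layer $L(v)$ in $C_{\ell\pm(p+1)}$ already knows $m_v$ and has received it exactly once; and (ii) at the start of every ranking phase each node has received only $O(\tau)$ of its messages exactly once --- fewer than $8\tau$ --- so that \autoref{lemma:ranking} applies to each such message. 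The stated lemma is the ``forward'' half of (i) ($i-p\le j\le i$ after relabelling), and, taking $p=n/k$, (i) gives full information spreading, hence \autoref{th:main1}.

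For the base case (round $\bar t_0$), part (i) holds because after Round~$0$ each clique is internally informed and $v$ delivered $m_v$ once to the layer-$L(v)$ node of each neighbouring clique; part (ii) holds because, by \autoref{proposition:p_distance} and a Chernoff bound on the Random-Phase choices, at $\bar t_0$ a node has received the $k$ messages of its own clique $\Theta(\tau)$ times each but only $O(\tau)$ messages from adjacent cliques, once each, and w.h.p.\ none from distance three or more. For the inductive step I fix $u\in C_i$ and $m_v$ with $v\in C_{i-p}$; by the induction hypothesis the layer-$L(v)$ node $w$ of $C_i$ holds $m_v$ at the start of ranking phase $p$ with $cnt_{w,v}=1$, hence --- by (ii) --- at rank at most $8\tau$, so by \autoref{lemma:ranking} $w$ forwards $m_v$ during phase $p$ with probability at least $1-n^{-d}$. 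That single forward floods $m_v$ throughout $C_i$ (so $m_v\in R_u(\bar t_p)$, re-establishing the first half of (i)) and reaches $w$'s layer-partner in $C_{i+1}$, which is the layer-$L(v)$ node of $C_{i+1}$ (re-establishing the second half); the \emph{phase separation property} is what guarantees that clique-mates of $w$ which first learn $m_v$ during phase $p$ cannot re-forward it then, so the frontier does not race ahead. Re-establishing (ii) is a bookkeeping argument about how many distinct messages a node can receive for the first time in one phase and with how many copies, and the induction then finishes with a union bound over all $n$ messages, all $n$ nodes and all $O(n/k)$ phases, with $\alpha$ and $d$ fixed large enough --- which is why their values are set only at the end of this section --- that the total failure probability is below $n^{-c}$.

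The step I expect to be the main obstacle is re-establishing the ``freshness'' invariant (ii): showing that no node ever accumulates $8\tau$ or more once-received messages at a phase boundary. This is delicate because a message reaches the bulk of a clique by being \emph{flooded} there, possibly by a single node, so that many nodes come to hold it with $cnt=1$ simultaneously; one must argue that only $O(\tau)$ such floodings of messages ``new'' to a given node occur per phase, and that the resulting constant is genuinely below $8$ --- which is consistent for every $\alpha\ge1$ since both sides scale with $\tau=\alpha\log n$. The remaining ingredients --- Chernoff bounds for the Random Phase, \autoref{lemma:ranking} for the ranking phases, the phase separation property for timing, and \autoref{proposition:p_distance} to limit which messages are in play --- are routine.
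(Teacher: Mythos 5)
There is a genuine gap, and it sits exactly where you predicted: your invariant (ii) is false under your single-carrier scheme, and no bookkeeping argument can rescue it. In your inductive step, each message $m_v$ with $v\in C_{i-p}$ is forwarded into and across $C_i$ by \emph{one} node $w$ (the layer-$L(v)$ node), so every other node of $C_i$ ends phase $p$ holding $m_v$ with $cnt=1$. But \emph{all} $k$ messages originating in $C_{i-p}$ must be flooded into $C_i$ during phase $p$, and all but the $O(\tau)$ of them that happened to arrive over $u$'s own matching edge are new to a given node $u\in C_i$. Hence $u$ ends the phase with $\Omega(k)=\omega(\log^3 n)$ once-received messages, not $O(\tau)$, and \autoref{lemma:ranking} (which needs rank at most $8\tau$) no longer applies to any particular one of them in phase $p+1$. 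The number of floodings of messages new to a node per phase is $\Theta(k)$, not $O(\tau)$, so the constant you hoped was ``genuinely below $8$'' is actually unbounded in $n$.

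The paper's resolution, which is the one idea your proposal is missing, is \emph{redundancy of carriers} combined with a threshold definition of freshness: a message is fresh only while $cnt<T$ with $T=\tau/2$, and the induction maintains (Property~3) that each $m_v$ is fresh in at least $T$ nodes of $C_i$ at the start of the phase, not just in the single layer-$L(v)$ node. Then when the phase's fresh messages are all sent (\autoref{lemma:h_send}), each intra-clique flooding delivers $m_v$ at least $T$ times to every node of $C_i$, which immediately makes it non-fresh there; the only messages that remain fresh at the phase boundary are the at most $4\tau$ per direction that arrived over the single matching edge from the neighboring clique (plus at most one pioneer), which is what keeps every rank at most $8\tau$ and lets \autoref{lemma:ranking} be reapplied. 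This is why ``fresh'' is defined with threshold $T$ rather than $cnt=1$, why the random-phase analysis establishes $T$ senders per message rather than one, and why the four properties are proved jointly: the de-freshening (your frontier not racing ahead is a side effect, but the real purpose is capping the fresh count) and the $T$-fold hand-off to the next clique are two faces of the same counting. Your outline of the base case, the use of the phase separation property, and the final union bound are all consistent with the paper, but without the $T$-redundancy the induction does not close.
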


\begin{proof}[Proof of \autoref{th:main1}]
	\autoref{lemma:iteration} shows that by the end of ranking phase $p$, w.h.p. each node $u$ knows all messages $m_v$ originating at distance at most $p$. This implies that full information spreading is completed after $n/k$ phases, since $n/k$ is the diameter of the graph, which proves \autoref{th:main1}.
	\qed
\end{proof}

In the rest of the section we prove~\autoref{lemma:iteration}.
The following definition is useful to indicate that a node shares responsibility for disseminating a message.

\begin{definition}[Fresh message]
	A \emph{fresh} message of a node $u$ at time $t$, is a message $m_v\in R_u(t)$ for which $cnt_{u,v}(t)<T$, for threshold $T=\frac{1}{2}\tau$.
\end{definition}

\vspace{-10pt}
\subsubsection{General Idea of the Proof.}
At the end of round 0, every message $m_v$ is disseminated in its own clique $C(v)$.
Then, we show that by the end of the random phase, each message $m_v$ is sent w.h.p. by a sufficiently large number of nodes $u\in C(v)$, to become non-fresh in all nodes of the clique $C(v)$.
Simultaneously, each of the messages $m_v$ becomes known and fresh in a sufficiently large number of nodes in the neighboring clique.

Then we show that ranking phases shift and preserve this situation.
At the beginning of every ranking phase, every fresh message in a node is also fresh in a sufficiently large number of nodes within the same clique.
During the phase, all of the fresh messages are sent w.h.p., implying that each one of the messages
\begin{inlinelistroman}
	\item \label{prop:1} is disseminated in the clique;
	\item \label{prop:2} is not fresh in nodes of the clique anymore; and
	\item \label{prop:3} is fresh in a sufficiently large number of nodes in the neighboring clique.
\end{inlinelistroman}

The combination of properties \ref{prop:2} and \ref{prop:3} is the crux of the proof.
It guarantees that the process progresses iteratively, as it leads to similar conditions again and again at the beginning of every new ranking phase.
This happens because every node can easily distinguish between a new message received from nodes within the clique (becomes non-fresh by the end of the phase), and a new message received from the neighbor in the neighboring clique (stays fresh at the end of the phase, and should be sent during the next phase).
We emphasize that all of this is done implicitly, without knowing the structure of the network.

This iterative behavior of the combined properties guarantees that every message propagates one additional clique per phase, until full information spreading completes after $O(n/k)$ phases.

Let $t'$, for $0\leq t'\leq \tau-1$, be the time from the first round of the random phase, i.e., $t'=t-t_0$.
The following proposition is immediate from the pseudocode:
\begin{proposition}
	\label{proposition:init_buff}
	At the beginning of the random phase, $\hat{B}_u(t_0)$ for every node $u\in C_i$ contains exactly $k-1$ messages $m_v$ originating at $v\in C_i$, and at most two additional messages, one originating at $v\in C_{i-1} \cap L(u)$, and one originating at $v\in C_{i+1} \cap L(u)$.
	Thus, it holds that $|\hat{B}_u(t_0+t')|= k+1-t'$, for $i=2, 3,\cdots,\frac{n}{k}-1$, and $|\hat{B}_u(t_0+t')|= k-t'$, for $i=1, \frac{n}{k}$.
\end{proposition}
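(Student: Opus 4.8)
The plan is to read the claim off the pseudocode of \autoref{alg:1} by tracking the sets $R_u$ and $S_u$ through Round~0 and the random phase. First I would determine the state at time $t_0$. In Round~0 every node $u$ calls \textsc{SyncRound} on $m_u$, so it sends exactly $m_u$ and nothing else, whence $S_u(t_0)=\{m_u\}$. In the same round $u$ receives the messages broadcast by all of its neighbours. Since $C(u)=C_i$ is a clique, $u$ is adjacent to the $k-1$ other nodes of $C_i$ and receives their $k-1$ (pairwise distinct, and distinct from $m_u$) messages. The only other neighbours of $u$ are its partners under the two perfect matchings incident to $C_i$: one node in $C_{i-1}$ and one in $C_{i+1}$ when these cliques exist. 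Because the only edges between consecutive cliques are matching edges, the layers of $G_{n,k}$ are precisely the $k$ vertex-disjoint paths obtained by chaining matching edges, so these two partners are exactly the nodes of $C_{i-1}\cap L(u)$ and $C_{i+1}\cap L(u)$, and from each of them $u$ receives one message. For $i=1$ the clique $C_{i-1}$ is absent, and for $i=n/k$ the clique $C_{i+1}$ is absent, leaving only one cross-clique message in those two cases. Together with $m_u$, already in $R_u(0)$, this pins down $R_u(t_0)$.

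Next I would form $\hat{B}_u(t_0)=B_u(t_0)=R_u(t_0)\setminus S_u(t_0)$: subtracting the single sent message $m_u$ removes the unique $C_i$-message that $u$ itself originated, leaving exactly the $k-1$ messages originating in $C_i\setminus\{u\}$ together with the at most two cross-clique messages. Hence $|\hat{B}_u(t_0)|=k+1$ when $2\le i\le n/k-1$ and $|\hat{B}_u(t_0)|=k$ when $i\in\{1,n/k\}$. I would remark in passing that since $k=\omega(\log^3 n)$ we have $k\ge\tau=\alpha\log n$ for $n$ large, so the random phase is well defined (the buffer does not run dry). Then, to obtain the time-indexed sizes, I would use that each of the $\tau$ rounds of the random phase pops exactly one message from $\hat{B}_u(t_0)$ and that $\hat{B}_u$ --- being an explicit variable --- is never enlarged by receptions during the phase. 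Writing $t'=t-t_0$, the buffer at the start of round $t_0+t'$ (before that round's pop) has lost exactly $t'$ messages, so $|\hat{B}_u(t_0+t')|=|\hat{B}_u(t_0)|-t'$, which equals $k+1-t'$ for $2\le i\le n/k-1$ and $k-t'$ for $i\in\{1,n/k\}$.

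There is no real obstacle: the statement is a direct consequence of the topology of $G_{n,k}$ and the control flow of \autoref{alg:1}. The only places to be slightly careful are the bookkeeping of $B_u=R_u\setminus S_u$ --- only $m_u$ is removed, since it is the sole message sent in Round~0 --- and the indexing convention that $\hat{B}_u(t_0+t')$ denotes the buffer content at the beginning of round $t_0+t'$, i.e.\ just before the pop performed in that round.
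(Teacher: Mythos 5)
Your argument is correct and is exactly the bookkeeping the paper has in mind: the paper offers no written proof, declaring the proposition ``immediate from the pseudocode,'' and your walkthrough of Round~0 (the $k-1$ clique neighbours plus the at most two matching partners in $C_{i-1}\cap L(u)$ and $C_{i+1}\cap L(u)$), the set difference $B_u=R_u\setminus S_u$ removing only $m_u$, and the one-pop-per-round depletion of the explicit variable $\hat{B}_u$ is precisely the verification being elided. The side remark that $k=\omega(\log^3 n)\ge\tau$ keeps the buffer from running dry is a worthwhile addition the paper leaves implicit.
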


Namely, nodes of inner cliques ($C_i, 1<i<n/k$) start the random phase with $|\hat{B}_u(t_0)|=k+1$, while nodes of cliques $C_1$ and $C_{n/k}$ start the random phase with $|\hat{B}_u(t_0)|=k$.

\subsection{Analysis of the Random Phase}
The following lemma analyzes the initial random phase, and shows that every message $m_v$ is non-fresh in all nodes of $C(v)$ at the end of the random phase:
\begin{lemma}
	\label{lemma:known_promotes}
	At the end of the random phase, for every message $m_v$ and for all nodes $u\in C(v)$, $m_v$ is non-fresh for $u$, with probability at least $1-\frac{1}{n^{\alpha/48-1}}$.
\end{lemma}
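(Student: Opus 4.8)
The plan is to fix a message $m_v$ and a node $u \in C(v)$, and to show that by the end of the random phase, $cnt_{u,v}(\bar t_0) \geq T = \frac{1}{2}\tau$ with the claimed probability; a union bound over the at most $n$ choices of $v$ and $k \leq n$ choices of $u$ then finishes the lemma (after adjusting the exponent, which is where the ``$-1$'' in the exponent comes from). The key observation is that at the start of round~$0$ every node $w \in C(v)$ holds $m_v$ in $R_w$, hence by the single round~$0$ every node in $C(v)$ has sent $m_v$ to all its clique-neighbors, so at the beginning of the random phase $m_v \in \hat B_w(t_0)$ for \emph{every} $w \in C(v)$ (it is received but not yet re-sent by $w$ in the random phase), and by Proposition~\ref{proposition:init_buff}, $|\hat B_w(t_0)| \leq k+1$ throughout. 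Thus in each round $t'$ of the random phase, each of the $k-1$ other nodes $w$ in $C(v)$ that still holds $m_v$ in $\hat B_w$ picks it uniformly from a set of size $|\hat B_w(t_0 + t')| \leq k+1$, so it sends $m_v$ to $u$ with probability at least $\frac{1}{k+1}$ in that round, independently across the $w$'s (the random choices of distinct nodes are independent).

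First I would lower-bound the number of fresh copies of $m_v$ that are still ``available to be sent'' at each round. Since a node only sends a message once, after round $t'$ up to $t'$ of the $k-1$ clique-peers may have already forwarded $m_v$; but as long as fewer than $T = \tau/2$ of them have sent it, at least $k-1-\tau/2 = \Omega(k)$ fresh copies remain (using $k = \omega(\log^3 n) = \omega(\tau)$). So conditioned on the event that $u$ has not yet received $\tau/2$ copies, in each of the $\tau$ rounds $u$ receives $m_v$ from each remaining peer independently with probability $\geq \frac{1}{k+1}$. I would model the total count $cnt_{u,v}(\bar t_0)$ as stochastically dominating a sum of $\tau \cdot \Omega(k)$ independent Bernoulli trials each with success probability $\geq \frac{1}{k+1}$, giving expectation $\Omega(\tau)$; more carefully, one couples round by round so that the count dominates a Binomial with mean a constant times $\tau$ (the precise constant, together with $\alpha$, is what produces the exponent $\alpha/48$).

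The main step is then a Chernoff bound: if $X \sim \mathrm{Bin}(N, p)$ with $Np \geq c\,\tau = c\alpha \log n$ for a suitable constant $c$, then $\Pr[X < \tau/2] \leq \Pr[X < \mu/2] \leq e^{-\mu/8} \leq n^{-c\alpha/8}$, and chasing the constants ($c$ turns out to be about $1/6$ after accounting for the $\frac{1}{k+1}$ versus $\frac{1}{k}$ slack and the $\Omega(k)$ versus $k-1$ slack) yields the stated $1 - n^{-\alpha/48}$ per (node, message) pair, hence $1 - n^{-(\alpha/48 - 1)}$ after the union bound. I expect the main obstacle to be handling the mild dependence introduced by the stopping condition ``$cnt_{u,v}$ has not yet reached $T$'' and by the fact that a peer $w$ drops $m_v$ from $\hat B_w$ once it sends it: the clean way around this is a coupling argument showing that the real process dominates an ``idealized'' process in which each of some fixed set of $\Omega(k)$ peers independently sends $m_v$ to $u$ at a uniformly random round in $\{1,\dots,\tau\}$ (or with probability $\geq \frac{1}{k+1}$ each round), and then $u$'s count is a genuine sum of independent indicators to which Chernoff applies directly. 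The rest — the union bound and constant-bookkeeping — is routine.
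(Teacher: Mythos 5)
Your overall plan is the paper's plan: show that enough clique-peers of $u$ forward $m_v$ during the $\tau$ rounds of the random phase, apply a Chernoff bound to the number of such peers, and union bound. However, the specific device you propose for handling the dependence --- stochastic domination of the real process by an ``idealized'' per-round process --- does not work as stated, and it is also unnecessary. In the idealized process where each peer sends $m_v$ independently with probability $\geq \frac{1}{k+1}$ in \emph{every} round, a peer can deliver $m_v$ to $u$ several times, whereas in the real process each peer delivers it at most once (non-repetitive sending); so the real reception count does not stochastically dominate the idealized one, and the per-peer comparison is between a Bernoulli$\bigl(\frac{\tau}{k+1}\bigr)$ and a Binomial$\bigl(\tau,\frac{1}{k+1}\bigr)$, neither of which dominates the other. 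The variant ``each of $\Omega(k)$ peers sends at a uniformly random round'' is worse: it would deliver $\Omega(k)\gg\tau$ copies, far more than the real process. The clean resolution, which is what the paper does, is to define \emph{one} indicator per peer $w\in C(v)$ for the event that $w$ sends $m_v$ at some point during the phase. These indicators are genuinely independent (they concern disjoint nodes' coin flips), and their success probability needs no coupling: since $|\hat B_w(t_0+t')|\le k+1-t'$ by Proposition~\ref{proposition:init_buff}, the probability of \emph{never} picking $m_v$ telescopes, $\prod_{t'=0}^{\tau-1}\frac{k-t'}{k+1-t'}=\frac{k+1-\tau}{k+1}$, so each peer sends $m_v$ with probability at least $\frac{\tau}{k+1}\ge\frac{2\tau}{3k}$, the sum has mean $\ge\frac{2\tau}{3}$, and Chernoff with $\delta=\frac14$ gives the $n^{-\alpha/48}$ tail directly.

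A second, smaller point: you union bound over all $(u,v)$ pairs, which costs an extra factor of $n$ and would yield exponent $\alpha/48-2$ rather than the stated $\alpha/48-1$. Because $C(v)$ is a clique, every node of $C(v)$ receives $m_v$ from \emph{every} peer that sends it, so the reception count is (up to an additive one) the same random variable $X_v$ for all $u\in C(v)$; one Chernoff bound per message $v$ suffices, and the union bound is over the $n$ messages only.
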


\begin{proof}
	Fix $v$.
	Message $m_v$ is disseminated in $C(v)$ by the start of the random phase.
	By \autoref{proposition:init_buff}, for every $u \in C(v)$,
	it holds that $|\hat{B}_u(t_0+t')|\leq k+1-t'$ during the random phase.
	
	Let $\mathbbm{1}_{u,v}$, for every $u\in C(v)$, be an indicator variable that indicates whether node $u$ sends $m_v$ during the random phase or not.
	Then
	\begin{equation*}
		\begin{split}
			\Pr[\mathbbm{1}_{u,v}=1] & \geq 1-\prod_{t'=0}^{\tau-1} \frac{k-t'}{k+1-t'} = 1-\frac{k+1- \tau}{k+1} \geq \frac{\tau}{(3/2)k} \enspace .
		\end{split}
	\end{equation*}
	Let $X_{v}=\sum_{u\in C(v)} \mathbbm{1}_{u,v}$, be the number of nodes in $C(v)$ that send $m_v$ during the random phase, i.e., the number of times $m_v$ is received by every node in $C(v)$.
	Then
	\begin{equation*}
		\begin{split}
			\mu =E(X_{v}) &=E\left(\sum_{u\in C(v)} \mathbbm{1}_{u,v}\right)=\sum_{u\in C(v)} E(\mathbbm{1}_{u,v}) \geq\sum_{u\in C(v)} \frac{2\tau}{3k} = \frac{2\tau}{3} \enspace .
		\end{split}
	\end{equation*}
	Since $v$ is fixed, the indicator variables are independent, as they refer to decisions of distinct nodes.
	By applying a Chernoff bound~\cite[Chapter~4]{mitzenmacher2005probability}, we get
	$$\Pr[X_{v}\leq(1-\delta)\mu] \leq \exp\left(-\delta^2\mu/2\right) \leq \exp\left(-\delta^2\alpha\log n/3\right) < 1/n^{\frac{\alpha\delta^2}{3}} \enspace .$$
	By setting $\delta=\frac{1}{4}$, we get that a message $m_v$ is non-fresh in all nodes $u\in C(v)$ with probability at least $1- \frac{1}{n^{\alpha/48}}$.
	By a union bound, this holds for every node $v$ with probability at least $1-\frac{1}{n^{\alpha/48-1}}$.
	\qed
\end{proof}

\begin{definition}
	A \emph{pioneer} message in node $u\in C_{i}$ at time $t_p$ (beginning of ranking phase $p$), is a message $m_v\in R_u(t_p)$ that originated at $v\in C_{i-p-1}$.
\end{definition}

\paragraph{Pioneer Attributes.} If a message $m_v$ is a pioneer in node $u\in C_i$ at time $t_p$, then
\begin{inlinelistroman}
	\item \label{pio:layer} $v\in L(u)$ (by \autoref{proposition:p_distance}, the message was transmitted over the shortest path), and the following hold at time $t_p$:
	\item \label{pio:cnt1} $cnt_{u,v}(t_p) = 1$, and thus $m_v$ is fresh for $u$,
	\item \label{pio:unique} $m_v\notin R_{u'}(t_p)$ for every $u' \in C_{i}, u'\neq u$ (by \autoref{proposition:p_distance}),
	\item \label{pio:known} $m_v$ is disseminated in $C_{i-1}$ (by the node that relayed $m_v$ to its neighbor in $C_i$),
	and \item \label{pio:fresh} $m_v$ is fresh in every node $u'\in C_{i-1}$.
\end{inlinelistroman}
The following is proved in appendix.

\begin{lemma}
	\label{lemma:rand_pioneer}
	With probability at least $1-1/n^{\alpha/24-1}$,
at the end of the random phase, for every $i$, the number of pioneer messages that reach $C_i$ is $\leq 3 \tau$.
\end{lemma}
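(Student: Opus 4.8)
The plan is to bound the number of pioneer messages that reach $C_i$ by the end of the random phase. A pioneer message in some node of $C_i$ at time $t_1$ originated in $C_{i-2}$ (taking $p=1$, so distance exactly $2$), so by \autoref{proposition:p_distance} it traveled a shortest path, and in particular it passed through the matching edge between $C_{i-1}$ and $C_i$ exactly once, during some round $t'$ of the random phase. Fix $i$ and let $Y_i$ count the pioneer messages reaching $C_i$. I would write $Y_i = \sum_{u \in C_i} Z_u$, where $Z_u$ is the indicator that the unique matching-neighbor $w$ of $u$ in $C_{i-1}$ forwarded to $u$ (during the random phase) a message $m_v$ with $v \in C_{i-2}$. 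By Pioneer Attribute~\ref{pio:layer}, such a message must itself have been a pioneer-in-progress: $w$ must have received $m_v$ across the $C_{i-2}$–$C_{i-1}$ matching and then forwarded it before the phase ended. So $Z_u = 1$ requires two events: $w$ receives $m_v$ from its matching neighbor in $C_{i-2}$ in some round $t'$, and $w$ selects $m_v$ to send in some later round $t'' > t'$ of the random phase.

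First I would bound $\Pr[Z_u = 1]$. At the start of the random phase $\hat B_w(t_0)$ contains at most $k+1$ messages; the only message in $C_{i-2}$ that $w$ can ever hold during this phase is the one message $m_v$ with $v \in C_{i-2} \cap L(w)$ (by the phase separation property / \autoref{proposition:init_buff}, $w$ starts with at most one such message, and nothing new from $C_{i-2}$ can become sendable within the phase). For $w$ to forward that message, two things must happen: $w$ received it at round $t_0$ already (it cannot receive it later and still forward it within the same phase, by phase separation — actually here it either starts with it or never forwards it this phase), and $w$ picks it in one of the $\tau$ rounds. The probability that $w$ picks this particular message at least once during the phase is at most $\tau \cdot \frac{1}{k+1-\tau} \le \frac{2\tau}{k}$ for $\tau \le k/2$, essentially the complement of the bound already computed in \autoref{lemma:known_promotes}. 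Hence $\Pr[Z_u = 1] \le \frac{2\tau}{k}$ (and this already conditions away the event that $w$ has no such message at all, which only helps).

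Then I would sum over the $k$ nodes of $C_i$ to get $E[Y_i] \le 2\tau$, and apply a Chernoff bound to conclude $\Pr[Y_i > 3\tau] = \Pr[Y_i > (1 + 1/2)\cdot 2\tau] \le \exp(-2\tau/12) = \exp(-\alpha \log n / 6) = n^{-\alpha/6}$, using $\tau = \alpha \log n$. Finally a union bound over the $n/k \le n$ cliques gives failure probability at most $n^{-\alpha/6 + 1}$; matching the claimed $1 - 1/n^{\alpha/24 - 1}$ (the constant $24$ versus $6$ is slack I would absorb, or it reflects a slightly looser per-node bound such as $\Pr[Z_u=1] \le \tau/(k+1-\tau)$ handled with a weaker $\delta$). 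The one subtlety requiring care is the independence needed for Chernoff: the events $Z_u$ for distinct $u \in C_i$ depend on the choices made by distinct nodes $w \in C_{i-1}$ (distinct because the matching is a bijection), and these nodes' random choices are mutually independent, so the $Z_u$ are independent and the Chernoff bound applies.

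The main obstacle I anticipate is making the conditioning rigorous: the bound $\Pr[Z_u=1]\le 2\tau/k$ implicitly treats $w$ as if it always holds the relevant $C_{i-2}$-message with buffer size exactly $k+1$, whereas $w$ may hold fewer messages or not hold $m_v$ at all. One must argue that removing messages from $\hat B_w$ only increases selection probabilities, so the worst case for the upper bound on $Y_i$ is when every relevant $w$ does hold its $C_{i-2}$-message and has the full buffer — but then the bound $\tau/(k+1-\tau)$ is genuinely an upper bound regardless of which scenario occurs, since in every round the probability of selecting a fixed still-present message is at most $1/|\hat B_w| \le 1/(k+1-\tau)$. Dovetailing this monotonicity observation cleanly with the union over rounds (rather than double-counting) is the only place the argument needs to be written carefully; everything else is the routine Chernoff-plus-union-bound template already used in \autoref{lemma:known_promotes}.
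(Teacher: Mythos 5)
Your proposal is correct and follows essentially the same route as the paper: decompose the pioneer count over the matching between $C_{i-1}$ and $C_i$, observe via \autoref{proposition:init_buff} that each $w\in C_{i-1}$ holds exactly one $C_{i-2}$-originated message throughout the random phase, bound the probability that $w$ forwards it, and finish with Chernoff (independence across distinct $w$) plus a union bound over cliques. The only difference is that the paper computes the forwarding probability exactly as the telescoping product $1-\prod_{t'}\frac{k-t'}{k+1-t'}=\frac{\tau}{k+1}$, giving mean at most $\tau$ and a threshold of $(3/2)\tau$ per direction (hence $3\tau$ over both directions), whereas your union bound over rounds loses a factor of $2$; this slack is harmless for the per-direction bound actually used in Property~1, and disappears if you use the exact product.
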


\subsection{Analysis of Ranking Phases}
After analyzing the single random phase, here we analyze the ranking phases.
\begin{lemma}
	\label{lemma:h_send}
		With probability at least $1-\frac{1}{n^{d-2}}$,
	every node $u$ that starts ranking phase $p$ with at most $8 \tau$ fresh messages, sends all of them during the phase.
\end{lemma}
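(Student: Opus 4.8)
The plan is to reduce Lemma~\ref{lemma:h_send} to Lemma~\ref{lemma:ranking} together with a union bound. Recall that Lemma~\ref{lemma:ranking} states that a single message whose rank within $\hat{B}_u(t_p)$ is at most $8\tau$ is sent during ranking phase $p$ with probability at least $1-n^{-d}$. So the first step is to argue that if a node $u$ starts ranking phase $p$ with at most $8\tau$ fresh messages, then each of these fresh messages has rank at most $8\tau$ in the sorted list $mList$. This follows from the definition of the ranking function: $mList$ is sorted by $cnt$ values in increasing order, and a fresh message is exactly one whose $cnt$ value is below the threshold $T=\tfrac12\tau$. Every non-fresh message has $cnt\geq T > $ any fresh message's $cnt$ value, hence all fresh messages occupy the lowest positions of $mList$ (with ties broken arbitrarily, fresh messages still come before non-fresh ones because their $cnt$ values are strictly smaller). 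Since there are at most $8\tau$ of them, each fresh message has $rank_m\leq 8\tau$, so Lemma~\ref{lemma:ranking} applies to it.

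The second step is the union bound. Apply Lemma~\ref{lemma:ranking} to each of the (at most $8\tau$) fresh messages in $\hat{B}_u(t_p)$: each fails to be sent with probability at most $n^{-d}$, so the probability that some fresh message of $u$ is not sent during the phase is at most $8\tau\cdot n^{-d} = 8\alpha\log n\cdot n^{-d}$. Then union over all $n$ nodes $u$: the probability that some node fails to send all its fresh messages is at most $n\cdot 8\alpha\log n\cdot n^{-d} = 8\alpha n^{1-d}\log n$, which is at most $n^{-(d-2)}$ for large $n$ (absorbing the $8\alpha\log n$ factor into one power of $n$). This gives the claimed bound of $1-\tfrac{1}{n^{d-2}}$.

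One subtlety I would be careful about: Lemma~\ref{lemma:ranking} must be read as a statement that holds for the probability space as actually used by the algorithm during the phase, including the dynamic renormalization caused by the non-repetitive sending policy. The paper has already noted (in the discussion after Algorithm~\ref{alg:ranking}) that a message's probability only increases over the phase as long as it has not been sent, so the initial rank-based probability $\tfrac{1}{rH_b}$ is a valid lower bound throughout; Lemma~\ref{lemma:ranking} is presumably proved with this monotonicity in mind, so I can invoke it as a black box here. The only genuine content of the present lemma beyond Lemma~\ref{lemma:ranking} is the observation that "fresh" implies "rank $\leq 8\tau$" when there are at most $8\tau$ fresh messages, plus bookkeeping of the union bound. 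I do not expect a real obstacle; the main thing to get right is the off-by-constant in the exponent ($d$ vs.\ $d-2$), which is exactly what the two layers of union bound (over $\leq 8\tau$ messages, then over $n$ nodes) are there to absorb.
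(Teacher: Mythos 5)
Your proposal is correct and follows essentially the same route as the paper's proof: observe that all fresh messages occupy ranks at most $8\tau$, invoke Lemma~\ref{lemma:ranking} for each, and apply a union bound first over the at most $8\tau$ fresh messages of a node and then over all $n$ nodes to land at the exponent $d-2$. Your explicit justification that fresh messages (with $cnt < T$) precede non-fresh ones in the sorted list is a detail the paper merely asserts, but it is the right reason and adds nothing that diverges from the paper's argument.
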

The proof appears in appendix.
To prove \autoref{lemma:iteration}, we show a sequence of four inductive properties, that hold for ranking phase $p$, with probability at least $1-\left(\frac{2p}{n^{d-2}}+\frac{2}{n^{\alpha/48-1}}\right)$.

\vspace{-10pt}
\subsubsection{Property 1.}
For every $i, 1\leq i\leq\frac{n}{k}$, it holds that the number of messages $m_v$, $v\in C_{i-p-1}$, such that $m_v \in R_u(t_{p})$ for some $u\in C_i$ (pioneers), is at most $3 \tau$, and each reaches a distinct node $u\in L(v)$.

\vspace{-10pt}
\subsubsection{Property 2.}
For every $i, 1\leq i\leq\frac{n}{k}$, and every node $u\in C_i$, it holds that at time $t_{p}$ there are at most $4 \tau$ \emph{fresh} messages $m_v$ for node $u$ for every one of the two directions of flow ($8 \tau$ in total).
All of them originated at nodes $v\in C_{i-p}$ (similarly, $v\in C_{i+p}$), except for at most one (a pioneer) which originated at $u'\in C_{i-p-1} \cap L(u)$ (similarly, $u'\in C_{i+p+1} \cap L(u)$).
All messages $m_v\in R_u(t_{p}), v\in C_{i-p}$ (similarly, $v\in C_{i+p}$), are fresh.

\vspace{-10pt}
\subsubsection{Property 3.}
For every $i, 1\leq i\leq\frac{n}{k}$, and every node $v\in C_{i-p}$, it holds that $m_v$ is fresh for at least $T$ nodes $u\in C_i$ at time $t_{p}$.
Recall that $T=\tau/2$.

\vspace{-10pt}
\subsubsection{Property 4.}
For every $i, 1\leq i\leq\frac{n}{k}$, every node $u\in C_i$, and every node $v$ such that $v\in C_j$ for some $i-p\leq j \leq i$,
it holds that $m_v\in R_u(\bar{t}_p)$, and $m_v$ is non-fresh.

\paragraph{}
\vspace{-5pt}
We prove the four properties simultaneously by induction on the ranking phase number, $p$.
To prove the base cases, we assume that all events described in \autoref{lemma:known_promotes}, \autoref{lemma:rand_pioneer}, and \autoref{lemma:h_send} (for $p=1$) occur. Notice that, by a union bound, the probability for this is at least $1-\left(\frac{1}{n^{\alpha/24-1}}+\frac{1}{n^{\alpha/48-1}}+\frac{1}{n^{d-2}}\right)\geq 1-\left(\frac{2}{n^{\alpha/48-1}}+\frac{2}{n^{d-2}}\right)$.

To prove the induction step, we assume that all events described in the four properties for $p-1$, and in \autoref{lemma:h_send} for $p-1$ and $p$, occur.
This happens with probability at least $1-\left(\frac{2}{n^{\alpha/48-1}}+\frac{2(p-1)}{n^{d-2}}+\frac{1}{n^{d-2}}+\frac{1}{n^{d-2}}\right)=1-\left(\frac{2}{n^{\alpha/48-1}}+\frac{2p}{n^{d-2}}\right)$.
The complete inductive proof appears in appendix.
Property 4 guarantees that full information spreading is completed after ranking phase $p=n/k$, with probability at least $1-\left(\frac{2n/k}{n^{d-2}}+\frac{2}{n^{\alpha/48-1}}\right)\geq 1-\left(\frac{1}{n^{d-3}}+\frac{1}{n^{\alpha/48-2}}\right) \geq 1-\frac{1}{n^c}$, for a constant $c$, by fixing $d$ and $\alpha$ to values $d>c+3, \alpha >48c+96$.
This completes the proof of \autoref{lemma:iteration}, from which \autoref{th:main1} follows.

\section{Fault Tolerance}
\label{sec:alg_ft}
\autoref{alg:1} highly depends on the random phase in the following sense.
For every node $v$, consider the set of nodes in neighboring cliques that know message $m_v$ by the end of the random phase.
Then, w.h.p. the algorithm spreads $m_v$ using the layers of nodes in the above set (``carriers'').
This means that the paths of a message are fixed very early in the algorithm and do not alternate.

A single failure of a node in each layer (carrier) is sufficient to break down its role.
Each message relies on at least $T$ different layers to proceed.
Hence, the algorithm is sensitive to failures in which less than $T$ carrier layers are non-faulty.

At the beginning of ranking phase $p$, consider the case where a message $m_v\in C_{i-p}$ is fresh in $x<T$ nodes in clique $C_i$, due to failures.
The behavior of the algorithm in such case is as follows:
During the ranking phase, less than $T$ nodes in the clique send the message, so all other nodes in $C_i$ receive the message less than $T$ times, thus it stays fresh in all of them at the end of ranking phase $p$.
Starting from the next ranking phase, the message $m_v$ propagates regularly over those $x<T$ carriers, but also propagates over all other carriers, with a delay of a phase.
This means that every layer becomes responsible for one extra message (in addition to at most $8 \tau$ messages),
which may still be tolerable.
In general, our algorithm can manage a constant number of such occurrences.

We aim to cope with a larger number of failures, so we modify our algorithm to help layers bypass their failing nodes, so they continue operating as carriers.

\vspace{-3mm}
\subsection{Shuffle Phases}
We invoke a shuffle phase between every two ranking phases, so phases of the algorithm now proceed as described in \autoref{fig:phases_2}.
\begin{algorithm}[htb!]
	\caption{for each node $u$}
	\label{alg:shuffle}
	\begin{algorithmic}[1]
		\State \Call{SyncRound}{$m_u$} \Comment{Round 0}
		\State RandomPhase()
		\Loop
		\State RankingPhase()
		\State ShufflePhase()
		\EndLoop
		\Algphase{ShufflePhase $p$}
		\State $\hat{B}_u(\bar{t}_{p}) \gets$ fresh messages in $B_u(t)$ \Comment{$t=\bar{t}_{p}$}
		\ForAll{$m_v \in \hat{B}_u(\bar{t}_{p})$}
		\State $phasecnt_{u,v} \gets 1$
		\EndFor
		\State $R \gets \hat{B}_u(\bar{t}_{p})$
		\Loop \textbf{ $8\tau$ times}
		\If{$\hat{B}_u(\bar{t}_{p}) = \emptyset$}
		\State send own message $m_u$
		\Else
		\State pop and send a fresh message from $\hat{B}_u(\bar{t}_{p})$
		\EndIf
		\State $R' \gets$ receive messages
		\ForAll{$m_v \in R'$}
		\If {$m_v \notin R$}
		\State $phasecnt_{u,v} \gets 1$
		\Else
		\State $phasecnt_{u,v} \gets phasecnt_{u,v} +1$
		\EndIf
		\State $R \gets R \cup \{m_v\}$
		\EndFor
		\State $t \gets t+1$
		\EndLoop
		\State $R \gets$ $R$ after filtering out unwanted messages. \label{alg2:lineR}
		\Comment{Filter out messages $m_v$ with $phasecnt_{u,v}<\hat{c}\cdot T$}
		\Comment{Filter out messages that were non-fresh prior to the start of the phase}
		\State $R_u(t) \gets R_u(t) \cup R$
		\State Select $4 \tau$ messages from $R$ randomly, rank them from 1 to $4 \tau$.
	\end{algorithmic}
\end{algorithm}
Roughly speaking, the objective of a shuffle phase, is that nodes of every clique re-divide their responsibilities over messages.

A shuffle phase consists of $8 \tau$ rounds.
During it, every node sends its fresh messages (and receives fresh messages from all neighbors).
Instead of updating the regular $cnt$ values, nodes use separate counters,~$phasecnt$, to count the number of receptions for each message during the current shuffle phase.
Recall that the objective is shuffling the fresh messages between nodes of same clique.
Thus, at the end the of the shuffle phase, every node identifies and filters out unwanted messages, which are messages received from neighboring cliques (low $phasecnt$ values), and messages that were already non-fresh prior to the start of the shuffle phase.
Then it randomly picks $4 \tau$ new fresh messages, to start the next ranking phase with.

The important gain from this cooperative division of responsibilities done by the nodes of a clique, is that a node $u\in C_i$ that does not receive new messages from its faulty neighbor $u'\in C_{i-1} \cap L(u)$, can overcome the failure of the carrier layer, and still take part in transmitting relevant messages from one clique to the other, with no delays.
The proof of the following appears in appendix.
\begin{theorem}
	\label{thm:alg2_time}
		\autoref{alg:shuffle} completes full information spreading on $G_{n,k}$ in $O\left(\frac{n}{k}\log^3 n\right)$ rounds, w.h.p.
\end{theorem}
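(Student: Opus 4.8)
The plan is to follow the proof of \autoref{th:main1} almost verbatim, establishing a shuffle-aware version of the iteration lemma (\autoref{lemma:iteration}): for every $p$ and every $u\in C_i$, by the end of ranking phase $p$ node $u$ knows $m_v$ for all $v\in C_j$ with $i-p\le j\le i$, w.h.p.; since $n/k$ is the diameter of $G_{n,k}$, running $n/k$ iterations completes full information spreading. Once this is proved, the round bound is immediate: a ranking phase lasts $\tau'=8d\tau\log^2 n=\Theta(\log^3 n)$ rounds and an inserted shuffle phase only $8\tau=\Theta(\log n)$ rounds, so each of the $n/k$ iterations costs $\Theta(\log^3 n)$ rounds, and together with the single initial round and the random phase ($\Theta(\log n)$ rounds) this gives the claimed $O\!\left(\frac n k\log^3 n\right)$ total.

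So the real content is re-deriving the four inductive properties of \autoref{sec:time_analysis} with a shuffle phase interleaved between consecutive ranking phases; the random phase and the base case are unchanged, and the work is in the induction step, where one must show that shuffle phase $p$ maps the state guaranteed at $\bar t_p$ (Properties 1--4 for $p$) to the state required at $t_{p+1}$ (Properties 1--3 for $p+1$). I would first observe that a shuffle phase is a purely \emph{intra-clique} operation: by the pioneer attributes and Property 2 for $p$ (together with the analysis of ranking phase $p$), the only fresh-and-unsent messages a node $u\in C_i$ holds at $\bar t_p$ are the $O(\tau)$ disseminated pioneers that reached $C_i$ (those originating in $C_{i-p-1}$ and $C_{i+p+1}$) -- the clique's own messages and the carrier messages from $C_{i\pm p}$ have been received at least $T$ times during ranking phase $p$ and are non-fresh. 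Since this is below $8\tau$, during the $8\tau$ rounds of the shuffle every node sends each of its initially-fresh messages exactly once (non-repetitive popping, padded with $m_u$); hence a message fresh in $x$ nodes of $C_i$ at the start of the shuffle is received (and $phasecnt$-counted) $x$ times by every node of $C_i$ and is fully disseminated within $C_i$.

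Given this, the filter behaves as intended. A pioneer of $C_i$ is fresh in $\approx k-1$ nodes of $C_i$ at $\bar t_p$ (it has $cnt=1$ in all non-carrier nodes), so after the shuffle it has $phasecnt\approx k-1\gg \hat{c}\cdot T=\Theta(\log n)$ in every node and survives both filtering criteria, whereas a message that genuinely arrived into $C_i$ from a neighbouring clique during the shuffle is seen $O(1)$ times and is dropped, and messages already non-fresh at $\bar t_p$ are dropped by the second criterion; thus the retained fresh set is the same $O(\tau)$ collection of disseminated pioneers in every node of $C_i$. The random selection of $4\tau$ of them, with their assigned low ranks seeding ranking phase $p+1$, keeps each pioneer fresh in at least $T=\tau/2$ nodes of $C_i$ w.h.p. (a Chernoff bound: each node includes a given pioneer with constant probability and $k=\omega(\log^3 n)\gg\tau$) and keeps every node's fresh load at most $8\tau$ -- which is exactly Properties 2 and 3 for $p+1$, Property 1 being the re-indexed pioneer count inherited from Property 2. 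Ranking phase $p+1$ is then analysed as in \autoref{lemma:iteration}: \autoref{lemma:h_send} sends all of the $\le 8\tau$ fresh messages and \autoref{lemma:ranking} (rank $\le 8\tau$) sends the $\le 4\tau$ pre-ranked pioneers, advancing every message one clique further and yielding Property 4 for $p+1$. The union bound over phases and cliques is as before, with one extra failure term of order $n^{-\Omega(1)}$ per iteration from the shuffle analysis, and the total failure probability is kept below $n^{-c}$ by choosing $\alpha$, $d$, $\hat{c}$ as at the end of \autoref{sec:time_analysis}.

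The step I expect to be the main obstacle is precisely the correctness of the shuffle: showing that the $phasecnt$ threshold cleanly and simultaneously (in every node) separates messages disseminated within the clique from messages freshly forwarded across a clique boundary, and that after the random re-selection every message that still needs forwarding remains fresh in at least $T$ carriers while no node is overloaded -- this self-sustaining behaviour is what lets the induction run for $n/k$ iterations. A secondary but real nuisance is the bookkeeping forced by the fact that now $\bar t_p\ne t_{p+1}$ (a shuffle phase sits between them), so the phase-separation argument behind \autoref{proposition:p_distance} and the precise interaction between the re-ranked set produced by the shuffle and the ranking function used at the start of the next phase both have to be restated with care.
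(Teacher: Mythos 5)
Your high-level architecture matches the paper's: re-prove the four inductive properties with a shuffle phase interleaved between consecutive ranking phases, show that the shuffle preserves them (filter by $phasecnt$ and by prior non-freshness, then randomly re-select $4\tau$ messages), and read the round bound off $\tau'=\Theta(\log^3 n)$ per iteration. However, your quantitative description of the state the shuffle acts on is wrong, and the error lands exactly on the step that needs a real argument. At $\bar{t}_p$ the fresh messages of clique $C_i$ (per direction) are \emph{not} ``the same $O(\tau)$ collection of disseminated pioneers in every node, each with $cnt=1$ in $\approx k-1$ nodes'': they are \emph{all} $k$ messages originating at $C_{i-p-1}$, each fresh in at least $T$ (but in general far fewer than $k$) nodes of $C_i$, with each individual node holding only $O(\tau)$ of them --- this is Properties 2 and 3 read at the end of ranking phase $p$, and it is precisely what \autoref{lemma:k_per_side} shows survives the filter: a set of size $2k$, identical in every node only \emph{after} the shuffle has disseminated it, not before. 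If only $O(\tau)$ messages per clique survived each iteration, only $O(\tau)$ messages could advance one clique per phase and the $n/k$-phase bound would collapse. Consequently your concentration step --- ``each node includes a given pioneer with constant probability'' --- rests on a false premise: each node selects $4\tau$ messages out of a surviving pool of $2k$, so a given message is selected with probability only $2\tau/k=o(1)$; the expected number of selectors is $2\tau$ and one needs the Chernoff deviation down to $T=\tau/2$ of \autoref{lemma:shuffle_threshold} to recover Property 3 for the next phase.

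A second, smaller slip: the messages the $phasecnt$ filter must \emph{drop} are the next generation's pioneers, i.e.\ the $\le 3\tau$ messages from $C_{i-p-2}$ that each reached a single node of $C_i$ during ranking phase $p$ (they do not ``arrive during the shuffle''; they are already fresh at $\bar{t}_p$, are sent once in the shuffle, accumulate $phasecnt\le 2<\hat{c}T$, and are filtered). This is the paper's extension of Property 1, which guarantees there are no pioneers at all at $t_{p+1}$. Your filtering mechanism is the right one, but the induction only closes with the ``$\ge T$ carriers out of $k$ messages'' bookkeeping above, not with the ``$O(\tau)$ clique-wide-known messages'' picture.
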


\vspace{-3mm}
\subsection{Resilience to Faults}
Recall that we consider a model of independent failures of nodes, where each node fails at each round with probability $q$, and never recovers.
Let ${\tau_{e}}\leq2\frac{n}{k}\tau^\prime=O\left(\frac{n}{k}\log^3 n\right)$ (the round number at the end of ranking phase $n/k$ in \autoref{alg:shuffle}).
First, we prove the following.
The proof appears in appendix.
\begin{lemma}
	\label{lemma:nonfaulty_clique}
		At the end of round ${\tau_{e}}$, the number of non-faulty nodes in each clique is at least $(30k/32)$, with probability at least $1-1/n^{30}$.
\end{lemma}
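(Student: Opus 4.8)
The plan is to reduce this to a one-node survival estimate, followed by a Chernoff bound inside each clique and a union bound over all cliques. First I would bound the probability that a fixed node $u$ is faulty at the end of round $\tau_e$. Since $u$ fails independently with probability $q$ in each round and never recovers, this probability equals $1-(1-q)^{\tau_e}\le q\,\tau_e$ by Bernoulli's inequality. Recalling that $\tau'=8d\tau\log^2 n = 8d\alpha\log^3 n$ for the constants $d,\alpha$ fixed at the end of \autoref{sec:time_analysis}, we have $\tau_e\le 2\frac{n}{k}\tau' = 16 d\alpha\frac{n}{k}\log^3 n$, so if $q\le c_0\frac{k}{n\log^3 n}$ then $q\,\tau_e\le 16 c_0 d\alpha$. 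Since $d$ and $\alpha$ are already fixed, we are free to take the hidden constant $c_0$ small enough that $q\,\tau_e\le p^\ast$ for any desired small constant $p^\ast$, say $p^\ast=\tfrac{1}{100}$.

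Next I would fix a clique $C_i$ and let $Y_i$ denote the number of its nodes that are faulty at the end of round $\tau_e$. Because distinct nodes fail independently, $Y_i$ is a sum of $k$ independent indicators, each with mean at most $p^\ast$, hence $Y_i$ is stochastically dominated by $\mathrm{Bin}(k,p^\ast)$ and $\mathbb{E}[Y_i]\le p^\ast k\le k/100$. Having at least $30k/32$ non-faulty nodes in $C_i$ is exactly the event $Y_i\le k/16$, so it suffices to bound $\Pr[Y_i> k/16]$. Since $k/16\ge 2e\cdot p^\ast k$ (indeed $\tfrac{1}{16}\ge\tfrac{2e}{100}$), the standard Chernoff tail bound, in the convenient form $\Pr[Y_i\ge R]\le 2^{-R}$ valid whenever $R\ge 2e\,\mathbb{E}[Y_i]$, gives $\Pr[Y_i> k/16]\le 2^{-k/16}$.

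Finally, I would invoke the standing assumption $k=\omega(\log^3 n)$, which makes $2^{-k/16}=2^{-\omega(\log^3 n)}$ smaller than $n^{-31}$ for all large enough $n$. A union bound over the $n/k\le n$ cliques then yields $\Pr[\exists\,i:\ C_i\text{ has fewer than }30k/32\text{ non-faulty nodes}]\le n\cdot 2^{-k/16}\le 1/n^{30}$, which is the claim.

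I do not expect a real obstacle; the only care needed is with the constants. The delicate point is to confirm that after $d$ and $\alpha$ have been pinned down in \autoref{sec:time_analysis}, one may still shrink the constant hidden in $q=O\!\big(k/(n\log^3 n)\big)$ so that $q\,\tau_e$ falls below the threshold the Chernoff bound requires for a deviation of $k/16$. Choosing a slightly more conservative bound on $q$ leaves as much slack as one wants, and the super-polynomial decay $2^{-\omega(\log^3 n)}$ easily absorbs the loss incurred by the union bound.
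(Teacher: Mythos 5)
Your proof is correct and follows essentially the same route as the paper: a per-node survival bound $(1-q)^{\tau_e}\geq 1-q\tau_e$ via Bernoulli's inequality, a Chernoff bound within each clique, and a union bound over the at most $n$ cliques. The only cosmetic differences are that you count faulty nodes with an upper-tail bound of the form $\Pr[Y\geq R]\leq 2^{-R}$ for $R\geq 2e\,\mathbb{E}[Y]$ (and shrink the constant in $q=O(k/(n\log^3 n))$ to get $q\tau_e\leq 1/100$), whereas the paper counts non-faulty nodes with a lower-tail multiplicative Chernoff bound under the assumption $q\tau_e\leq 1/32$; both yield the stated $1-1/n^{30}$ guarantee.
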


We show that the algorithm tolerates failures for $q$, $0\leq q \leq O\left(\frac{k}{n \log^3 n}\right)$.
\begin{theorem}
	\label{thm:main2}
	
\end{theorem}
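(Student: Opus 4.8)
\textbf{Proof plan for \autoref{thm:main2}.}
The plan is to combine the fault-free timing bound of \autoref{thm:alg2_time} with the guarantee of \autoref{lemma:nonfaulty_clique} that every clique retains a large majority of non-faulty nodes, and then to re-run the inductive argument of \autoref{sec:time_analysis} in a ``robust'' form in which each shuffle phase restores full redundancy after the preceding ranking phase. The round bound $O\!\left(\frac{n}{k}\log^3 n\right)$ is the same count as in \autoref{thm:alg2_time}: there are $n/k$ iterations, each a ranking phase ($\tau'=O(\log^3 n)$ rounds) followed by a shuffle phase ($8\tau=O(\log n)$ rounds), and faults can only delay completion, which is exactly what we rule out. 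Throughout, I would condition on the event of \autoref{lemma:nonfaulty_clique}, which holds with probability at least $1-1/n^{30}$ precisely because $q=O\!\left(\frac{k}{n\log^3 n}\right)$ and $\tau_e=O\!\left(\frac{n}{k}\log^3 n\right)$, so the expected number of failures in a fixed clique over the whole execution is a small constant fraction of $k$. Since nodes never recover, ``non-faulty at round $\tau_e$'' is the same as ``non-faulty throughout'', so under this event every clique has, at every round, at least $30k/32$ nodes that are operational for the entire execution.

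Next I would reprove the four inductive properties of \autoref{sec:time_analysis}, strengthened so that from iteration $1$ onward ``fresh for at least $T$ nodes of $C_i$'' is replaced by ``fresh for all but $O(\log n)$ of the $\geq 30k/32$ non-faulty nodes of $C_i$''. The engine of the strengthening is the shuffle phase, which turns a thin carrier set into a dense one: if at the end of ranking phase $p$ a relevant message $m_v$ is fresh in even one non-faulty node $u\in C_i$, then during the $8\tau$ rounds of shuffle phase $p$ that node sends $m_v$ (it holds at most $8\tau$ fresh messages and sends all of them), so every non-faulty node of the complete clique $C_i$ receives $m_v$; moreover $m_v$ is resent by the many non-faulty nodes of $C_i$ and hence accumulates a $phasecnt$ value $\Theta(k)\gg\hat{c}\cdot T$, so it survives the filtering step of \autoref{alg:shuffle}, whereas a message that only just arrived from a neighbouring clique is received $O(1)$ times and is filtered out. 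Thus after shuffle phase $p$ the message $m_v$ is among the fresh messages of essentially every non-faulty node of $C_i$ and is ranked in $[1,4\tau]$; during ranking phase $p+1$ every such node sends it (by \autoref{lemma:h_send} and \autoref{lemma:ranking}), which simultaneously makes $m_v$ non-fresh throughout $C_i$ (it is resent by $\Omega(k)\gg T$ nodes) and relays it across the perfect matching to $\Omega(k)$ nodes of $C_{i+1}$, where it becomes fresh; shuffle phase $p+1$ then re-densifies it there, and the induction turns. Because of this density, the ``delayed-propagation'' pathology described at the start of \autoref{sec:alg_ft} --- a message dropping below $T$ carriers and thereby loading every layer with an extra message --- does not arise once the first shuffle has run, since $O(\log n)$ failures per ranking/shuffle phase cannot bring an $\Omega(k)$-carrier message below $T$; this also keeps the number of distinct fresh messages per node bounded by $8\tau$ in every iteration, so \autoref{lemma:h_send} stays applicable.

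The base of the induction needs separate care, because the random phase has no preceding shuffle and ends with $m_v$ fresh in only $\Theta(\tau)$ nodes of a neighbouring clique (\autoref{lemma:known_promotes} and \autoref{lemma:rand_pioneer}). I would observe that the random phase lasts only $\tau=O(\log n)$ rounds, so the total number of failures during it is $O\!\left(qn\tau\right)=O\!\left(k/\log^2 n\right)$ --- a vanishing fraction of a clique --- and that the Chernoff bounds in the proofs of those lemmas have enough slack to absorb such a loss with essentially the same constants, so at time $t_1$ the message is fresh in at least $T$ \emph{non-faulty} nodes of the neighbouring clique. That is exactly what ranking phase $1$ followed by shuffle phase $1$ needs in order to enter the dense regime of the previous paragraph. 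With Property~$4$ (the robust analogue of \autoref{lemma:iteration}) in hand, after iteration $p$ every node knows every message originating within distance $p$, and taking $p=n/k$, the diameter of $G_{n,k}$, completes spreading within the claimed number of rounds.

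Finally I would collect the failure probabilities: $1/n^{30}$ from \autoref{lemma:nonfaulty_clique}, $O(1/n^{\alpha/48-1})$ from the random-phase lemmas, and $O(1/n^{d-2})$ per iteration from \autoref{lemma:h_send} together with the shuffle-phase dissemination/filtering analysis (again a Chernoff-and-union-bound computation), summed over at most $n$ iterations; choosing $\alpha$ and $d$ as large constants, as in \autoref{sec:time_analysis}, makes the total at most $1/n^c$. I expect the main obstacle to be the shuffle-phase analysis under faults: proving that the $phasecnt$ threshold $\hat{c}\cdot T$ cleanly separates messages that are genuinely fresh inside a clique (received $\Theta(k)$ times during the shuffle) from messages that merely arrived from a neighbour (received $O(1)$ times) even while nodes keep failing during the shuffle itself, and carrying the arithmetic that, after the random selection of $4\tau$ messages, still caps the number of distinct fresh messages per node by $8\tau$ in every iteration.
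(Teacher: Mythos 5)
Your overall scaffolding (conditioning on \autoref{lemma:nonfaulty_clique}, re-running the four-property induction, using the shuffle phase to restore carriers, and union-bounding over messages and phases) matches the paper, but the engine you propose for the shuffle phase is not how \autoref{alg:shuffle} behaves, and the invariant you build on it is wrong. First, during a shuffle phase a node only sends messages from $\hat{B}_u(\bar{t}_p)$, which is frozen at the start of the phase; a message first received \emph{during} the shuffle is never re-sent within that same shuffle. Hence the $phasecnt_{u,v}$ accumulated by the end of the phase equals the number of \emph{start-of-phase} carriers of $m_v$ in $C_i$ that survive, not $\Theta(k)$. In particular your claim that ``one surviving fresh carrier suffices'' fails: with a single carrier, $phasecnt_{u,v}=1<\hat{c}T$ and the message is filtered out at line~\ref{alg2:lineR} exactly as if it had merely arrived from a neighbouring clique. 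Second, the strengthened invariant ``fresh in all but $O(\log n)$ of the non-faulty nodes'' cannot hold: the final step of the shuffle selects only $4\tau$ messages per node out of up to $2k$ surviving candidates, so each message ends up fresh in only $\Theta(\tau)$ nodes of the clique (this is the content of \autoref{lemma:shuffle_threshold}); and if each of the $k$ messages originating in $C_{i-p}$ were fresh in $\Omega(k)$ nodes, each node would carry $\Omega(k)\gg 8\tau$ fresh messages and \autoref{lemma:h_send} would no longer apply. The selection step exists precisely to keep the per-node load at $4\tau$ per direction at the price of keeping the per-message carrier count at $\Theta(\tau)$, so there is no ``dense regime'' to enter.

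Consequently you are missing the actual probabilistic heart of the paper's proof: maintaining the invariant ``fresh in at least $T$ non-faulty nodes'' requires showing that, of the $T$ matched pairs in $C_{i-1}\times C_i$ that carry $m_v$ into the shuffle, at least $T^*=\hat{c}T$ pairs have both endpoints non-faulty for the whole execution. Since each pair survives with probability $\hat{q}\geq 1-2q\tau_e\geq 15/16$, the probability that fewer than $\hat{c}T$ survive is a binomial lower tail $\sum_{s<T^*}\binom{T}{s}(1-\hat{q})^{T-s}\leq 1/n^{\alpha/3-1}$ for $\hat{c}\leq 1/2$; only then does the message pass the $\hat{c}T$ filter, after which the Chernoff argument of \autoref{lemma:shuffle_threshold} (adjusted for the $\geq 31k/32$ non-faulty nodes, $\delta=11/15$) re-establishes $T$ fresh carriers. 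This survival-of-pairs computation, together with the choice of $\hat{c}$ relative to the filter threshold, is where the bound $q=O\bigl(k/(n\log^3 n)\bigr)$ actually enters, and it is absent from your plan.
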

\begin{proof}
Fix $i, p$.
Let $m_v$ be a message that is fresh in at least $T$ (non-faulty) nodes in $C_{i-1}$ at the end of shuffle phase $p-1$.
Here we analyze the probability that $m_v$ is \emph{not} shuffled successfully in clique $C_i$.

An unsuccessful shuffle might occur either because the $phasecnt$ values in $C_i$ at the end of shuffle phase $p$ are smaller than the threshold of $T^{*}=\hat{c}T$, so the message is filtered out (denote this event by $A$), or because the message was selected by less than $T$ (non-faulty) nodes.
By~\autoref{lemma:h_send}, at the beginning of shuffle phase $p$, the message $m_v$ is supposed to be fresh in at least $T$ nodes in $C_i$ (each of them gets the message from its respective neighbor in $C_{i-1}$).
Of these nodes in $C_i$, if one does not send $m_v$ during shuffle phase $p$, then either the node or its neighbor in $C_{i-1}$ (or both) becomes faulty by the end of shuffle phase~$p$.
The probability $\hat{q}$ for such a pair of nodes \emph{not} to fail is bounded from below (according to Bernoulli's inequality) by $\hat{q} = ((1-q)^{{\tau_{e}}})^2 \geq (1-q{\tau_{e}})^2\geq 1-2q{\tau_{e}}\geq1-1/16$.

Fix a set of $T$ pairs of nodes $S(m_v)\subseteq C_{i-1} \times C_i$, of those who know message $m_v$ in $C_{i-1}$ at the end of shuffle phase $p-1$, and their respective neighbors in $C_i$.
There might exist more than $T$ such pairs, but by fixing a set of size $T$ and ignoring the rest, we bound the probability of an unsuccessful shuffle from above, as the ignored nodes can only help and increase the probability of success.
A ``surviving'' pair is a pair of nodes from $S(m_v)$ where both are non-faulty at the end of the shuffle phase, and hence function properly (by sending message $m_v$) during shuffle phase $p$.
Denote by $s$, the number of ``surviving'' pairs.
We have:
\begin{equation*}
\begin{split}
		\Pr[A] & \leq \sum\limits_{s=0}^{T^{*}-1} {{T \choose s} \hat{q}^s (1-\hat{q})^{T-s}} \leq \sum\limits_{s=0}^{T^{*}-1} {{T \choose s} (1-\hat{q})^{T-s}} \leq \sum\limits_{s=0}^{T^{*}-1} {{T \choose s}  \left(\frac{1}{16}\right)^{T-s}} \enspace .
\end{split}
\end{equation*}
We sum over all $s\in\{0 , \ldots, T^{*}-1 \}$, where the number of ``survivors'' is lower than the threshold of~$\hat{c}T$, which implies that the message $m_v$ is filtered out, improperly, at the end of the shuffle phase due to a low $phasecnt$ value.

By setting $0 < \hat{c}\leq\frac{1}{2}$, we get that $\Pr[A] \leq 1/n^{\alpha/3-1}$ (see calculation in appendix).
Namely, the message is not filtered out with probability at least $1/n^{\alpha/3-1}$.
The number of non-faulty nodes in each clique is at least $31k/32$ with probability at least $1-\frac{1}{n^{30}}$, by~\autoref{lemma:nonfaulty_clique}.
An analysis similar to the one in the proof of~\autoref{lemma:shuffle_threshold} (with $\delta=11/15$) gives that,
once the message is not filtered out, it is selected by at least $T$ of the non-faulty nodes in $C_i$ with probability at least $1-1/n^{11^2\alpha/(15\cdot16)}$.
In total, by using a union bound, a message is not shuffled successfully between two consecutive shuffle phases with probability at most $\frac{1}{n^{\alpha/3-1}}+\frac{1}{n^{11^2\alpha/(15\cdot16)}}+\frac{1}{n^{30}} \leq \frac{1}{n^{6}}$ (for value of $\alpha$ fixed earlier).

We use union bound two more times, for all messages and for all phases, and get an upper bound for the probability that a message is not propagated properly, of $\frac{1}{n^{4}}$.
This proves that the algorithm tolerates failures that occur with probability $0\leq q\leq \frac{1}{32{\tau_{e}}}$ in the given model, with probability at least $1-\frac{1}{n^{4}}$.
\qed
\end{proof}

\section{Discussion}
\label{sec:discussion}
\subsubsection{Static-Routes Algorithms.}
Let $ALG$ be an algorithm that spreads information on $k$-vertex-connected graphs in $O\left(\frac{n}{k}\cdot\polylog(n)\right)$ rounds, by constructing static routes, and using them to disseminate messages in parallel, each message on a specific route.
This makes $ALG$ very sensitive to failures, as a single failure in a route suffices to render the entire route faulty.

However, it can easily be configured so that vertex-disjoint routes are combined into groups of size $\gamma$, and every node duplicates its messages and sends them concurrently over these components.
Notice that in $k$-vertex-connected graphs, $\gamma$ is bounded from above by $k$.
This costs $\gamma$ slowdown in runtime as a trade-off.
Denote this configuration of the algorithm by $ALG(\gamma)$.

We are interested in cases where $\gamma=O(\polylog(n))$, so that the runtime of the algorithm remains $O\left(\frac{n}{k}\cdot \polylog(n)\right)$.
Every combination of $\gamma$ vertex-disjoint routes induces a $\gamma$-vertex-connected subgraph, as it stays connected after the removal of any $\gamma-1$ vertices.
Each component functions as long as it stays connected.
According to \cite[Theorem~1.5]{CHGK14a}, for $\gamma=\Omega(\log^3n)$,
such a component stays connected w.h.p.
if its nodes are \emph{sampled} independently with a constant probability.
By considering the sampling process imposed by failures, i.e. considering the non-faulty nodes as sampled, then each component stays connected if a constant fraction of its nodes stays non-faulty during the execution, tolerating a constant fraction of nodes that fail.
The additional slowdown factor for each message to spread over such a component in the presence of faults can be loosely bounded form above by $O(\gamma)$, as the size of the combined component is $O(\gamma)$ the size of its original routes, (in the worst case a message traverses over all non-faulty nodes of the component).
In total, this configuration of the algorithm tolerates the failure of a constant fraction of nodes during its execution, which matches a probability of failure of $q=O\left(\frac{k}{n\cdot \polylog(n)}\right)$ per round, while preserving a time complexity of $O\left(\frac{n}{k}\cdot \polylog(n)\right)$.

The algorithm presented in~\cite{CHGK14b} is static-route, as it constructs CDS packings and routes messages over them.
The CDS packings are only fractionally vertex-disjoint, which requires a few modifications to the above analysis.
However, despite the above fix, the algorithm remains vulnerable due to the preprocessing stage.
Tolerating failures that occur during the preprocessing stage is more complicated, and the construction of CDS packings in the presence of failures is still an open problem.

\vspace{-10pt}
\subsubsection{Summary.}
In this paper, we show an information spreading algorithm, and prove that it is fast and robust for $G_{n,k}$.
The intriguing open question is whether this approach can work for general $k$-vertex-connected graphs.

To summarize, we find the question of devising a fast and robust information spreading algorithm in the Vertex-Congest model an intriguing open question, and view our result as a first step in this direction. The technique our algorithm leverages, of using probability distributions that change over time according to how the execution unfolds, may have applications in other settings as well.

\paragraph{Acknowledgements:} Keren Censor-Hillel is a Shalon Fellow. This research is supported by the Israel Science Foundation (grant number 1696/14).
We thank Mohsen Ghaffari, Fabian Kuhn, Yuval Emek and Shmuel Zaks for useful discussions.

\bibliographystyle{llncs2e/splncs}
\bibliography{back/ref}

\section{Appendix}

\subsection{The Graph $G_{n,k}$}
\label{sec:Gnk}

\begin{figure}
	\centering
	\includegraphics[width=\textwidth]{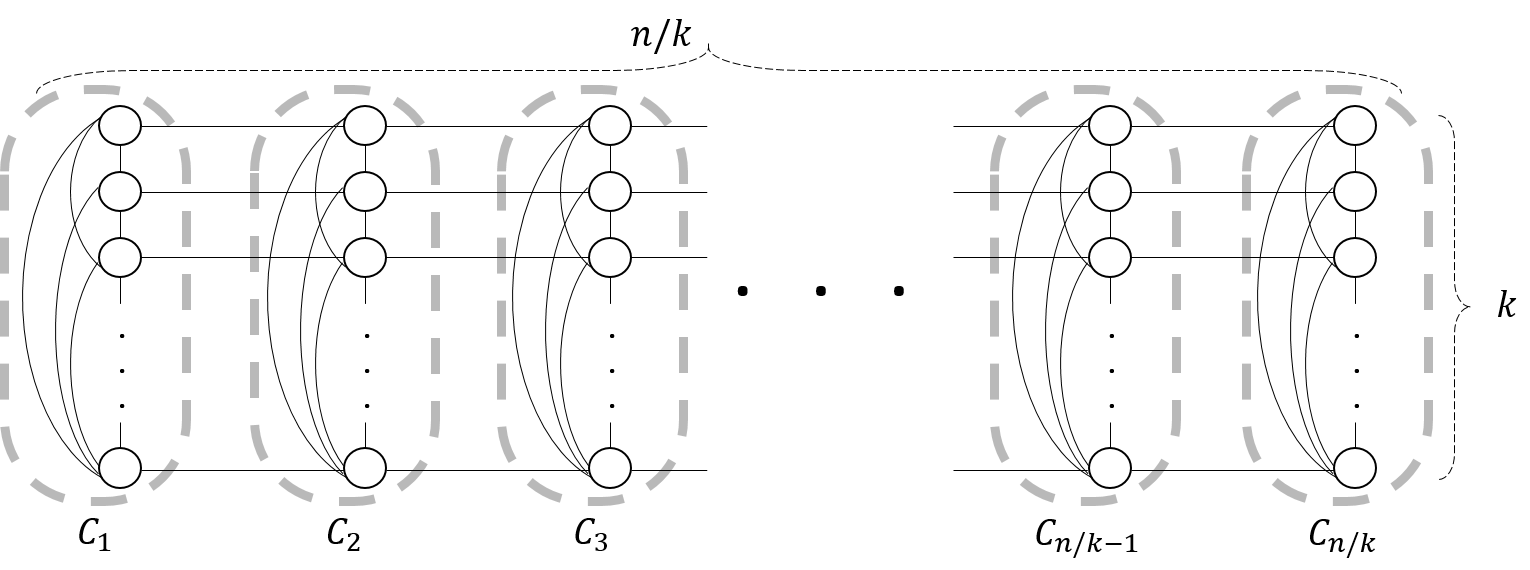}
	\caption{$G_{n,k}$ is an example of a $k$-vertex-connected graph with diameter $\frac{n}{k}$.}
	\label{fig:g_nk}
\end{figure}

\subsection{The Uniform Random Algorithm}
\label{sec:rand}
We consider the uniform random algorithm, in which every node picks and sends a message from its buffer in each round uniformly at random.
We show that the time complexity of the algorithm is asymptotically much slower than the optimal $\Omega(n/k)$.
Consider the uniform random algorithm running on graph $G_{n,k}$. We prove that the expected number of rounds for full information spreading is $\Omega(n/\sqrt{k})$.
First, we prove the following.
\begin{lemma}
	\label{lemma:linear_buff}
	If the buffer size of a node is at least $n/4$, then the number of rounds needed for a message $m_v$ in its buffer to be sent is $\Theta(n)$ in expectation.
\end{lemma}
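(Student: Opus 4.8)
The plan is to establish matching $\Omega(n)$ and $O(n)$ bounds on the expected number of rounds, both obtained by tracking the size of the node's buffer $B_u=R_u\setminus S_u$ together with the simple fact that, conditioned on $m_v$ not yet having been sent, $m_v$ is selected in a given round with probability exactly $1/|B_u|$. For the \emph{upper bound}, I would first note that there are only $n$ distinct messages in the network, so $|B_u(t)|\le n$ at every round, and that $m_v$ remains in $B_u$ until it is sent, since $R_u$ only grows and $m_v$ enters $S_u$ precisely when it is transmitted. Hence, as long as $m_v$ has not been sent, it is chosen in the current round with probability $1/|B_u(t)|\ge 1/n$, irrespective of the history. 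Writing $N$ for the round in which $m_v$ is first sent, this gives $\Pr[N>j]\le(1-1/n)^j$, so $E(N)=\sum_{j\ge0}\Pr[N>j]\le n=O(n)$.

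For the \emph{lower bound}, the key deterministic observation is that the buffer can lose at most one message per round: the single message sent that round is the only one removed from $B_u$, while receptions only add messages. Thus if $|B_u|\ge n/4$ at the round where we begin waiting (the relevant reading of the hypothesis; if instead it is assumed to hold at all times, the argument only simplifies), then $|B_u|\ge n/8$ throughout the next $n/8$ rounds. During each of these rounds, conditioned on $m_v$ not yet having been sent, it is selected with probability $1/|B_u|\le 8/n$, again uniformly over all prior histories, so $\Pr[N>n/8]\ge(1-8/n)^{n/8}$, which is bounded below by a positive constant (tending to $e^{-1}$) for $n$ large. Consequently $E(N)\ge (n/8)\Pr[N>n/8]=\Omega(n)$, and together with the upper bound this yields $E(N)=\Theta(n)$.

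The only genuinely delicate part is the buffer bookkeeping: one must check that in the uniform random algorithm a message leaves the buffer only upon being sent (the non-repetitive policy), that the node never runs out of messages within the window considered (immediate here, since $|B_u|\ge n/8>0$ in that window), and that the per-round selection-probability bounds hold uniformly over all prior histories, which is exactly what legitimizes multiplying them to obtain $(1-8/n)^{n/8}$. Everything else reduces to a routine geometric-tail computation, and I expect no further obstacle.
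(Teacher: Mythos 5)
Your proof is correct and follows essentially the same route as the paper: the buffer loses at most one message per round, so it stays above $n/8$ for the first $n/8$ rounds, during which the per-round selection probability is at most $8/n$, and a geometric-type tail estimate gives the $\Omega(n)$ lower bound. You additionally spell out the trivial $O(n)$ upper bound (buffer size at most $n$, hence per-round selection probability at least $1/n$), which the paper's one-line proof leaves implicit even though the lemma claims $\Theta(n)$.
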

\begin{proof}
	During the first $n/8$ rounds, the buffer size is at least $n/4-n/8=n/8$.
	The number of rounds until the message $m_v$ is first sent is bounded from below by a geometric random variable with success probability $p= 8/n$.
	The expectation of a geometric random variable is $1/p=n/8$, and the lemma follows.
	\qed
\end{proof}

\begin{theorem-repeat}{thm:uniform_rand_slow}
	
\end{theorem-repeat}
\begin{proof}
	To prove the theorem, we define a partition over the whole space, and calculate the conditional expectations of the number of rounds for each case.
	We show that the expected number of rounds in every case is $\Omega(n/\sqrt{k})$, and the theorem follows according to the law of total expectation:
	$$ E[X] = \sum_{i} E[X \mid A_i] \Pr[A_i] \enspace ,$$
	where $\{A_i\}$ is the partition.
	
	Let $r_0$ be the random variable of the first round number in which the buffer size of all nodes in clique $C_{n/k}$ is at least $n/2$.
	The buffer of every node consists of the messages it knows but not sent yet. In the executions in which such round does not exist, it holds that $n-t \leq n/2$, where $t$ is the last round of the dissemination process, implying $t\geq n/2\geq n/\sqrt{k}$.
	
	Otherwise, $r_0$ is well defined, and it holds that
	\begin{equation}
	E\left[r_0 \ \middle|\ r_0\geq\frac{n}{\sqrt{k}}\right] \geq \frac{n}{32\sqrt{k}} \enspace .
	\end{equation}
	Consider the set of messages $M_1=\{m_v \mid m_v$ is known to some node $u\in C_{n/k}\}$.
	We analyze two possible cases:
	
	\begin{enumerate}
		\item If $|M_1|<n$ at $r_0$, then there exists a message that is not known to any node in $C_{n/k}$ at round $r_0$. Let $m_v$ be such a message.
		Let $r_1$ be the random variable of the number of rounds since $r_0$ until the message $m_v$ spreads to all nodes of $C_{n/k}$.
		We argue that $E[r_0+r_1]\geq \frac{n}{32\sqrt{k}}$.
		The following trivially hold, since $r_0$ and $r_1$ are non-negative:
		\begin{equation}
		E\left[r_0+r_1 \ \middle|\ r_0<\frac{n}{\sqrt{k}}, r_1\geq \frac{n}{24}\right] \geq \frac{n}{32\sqrt{k}} \enspace .
		\end{equation}
		To conclude the argument for this case, it is enough to show that
		\begin{equation}
		\label{eq:3}
		E\left[r_0+r_1 \ \middle|\ r_0<\frac{n}{\sqrt{k}}, r_1<\frac{n}{24}\right]\geq \frac{n}{32\sqrt{k}} \enspace .
		\end{equation}
		
		In the following, we assume that $r_0<n/\sqrt{k}$ and $r_1<n/24$, and give a lower bound for $E\left[r_1 \ \middle|\ r_0<\frac{n}{\sqrt{k}}, r_1<\frac{n}{24}\right]$.
		In addition, we assume that $k\leq n/6$.
		At round $r_0$ it holds that at least $n/2-k\geq n/3$ messages are disseminated in $C_{n/k-1}$, for otherwise the messages do not reach nodes of $C_{n/k}$.
		During the $r_1$ rounds in the interval $[r_0, r_0+r_1]$, all buffers in all nodes in $C_{n/k}$ are of size at least $n/2-r_1$, in all nodes in $C_{n/k-1}$ are of size at least $n/3-r_1$, which means that all buffer sizes of nodes both cliques	are at least $n/4$ during the~$r_1$ rounds in the interval.
		Since buffer sizes are at least $n/4$ and at most $n$, the probability $\hat{q}$ that a node $v\in C_{n/k-1} \cup C_{n/k}$ that knows $m_v$ sends it is $1/n\leq \hat{q}\leq 4/n$.
		Let $X_r$ be the number of nodes in $C_{n/k-1}$ that send $m_v$ during the $r$ rounds that follow round $r_0$.
		It holds that $E\left[X_r \ \middle|\ r_0<\frac{n}{\sqrt{k}}, r_1<\frac{n}{24}\right]$ is at most $k(1-(1-\hat{q})^{r})\leq k(1-(1-\hat{q}{r})) =k\hat{q}{r}\leq \frac{4k{r}}{n}$ (the first inequality is according to Bernoulli's inequality).
		Each node in $C_{n/k-1}$ that sends the message relays it to its corresponding neighbor in $C_{n/k}$.
		If any of these nodes in $C_{n/k}$ sends $m_v$, then the message is disseminated and all $k$ nodes in $C_{n/k}$ know it.
		
		For every $r\geq n/k$, denote by $A_r$ the event $X_r \leq \frac{8kr}{n}$. By applying Markov's inequality we get that $\Pr\left[X_r \geq \frac{8kr}{n} \ \middle|\ r_0<\frac{n}{\sqrt{k}}, r_1<\frac{n}{24}\right]\leq \frac{4kr}{n}\big/ \frac{8kr}{n}=\frac{1}{2}$, and hence,
		$$\Pr\left[A_r \ \middle|\  r_0<\frac{n}{\sqrt{k}}, r_1<\frac{n}{24}\right] \geq \frac{1}{2} \enspace .$$
		Under the assumption that $A_r$ occurs, the probability for the dissemination to occur during these $r$ rounds (implying that $r_1\leq r$)
		is at most $1-\left( (1-\hat{q})^r\right)^{8kr/n} = 1-(1-\hat{q})^{8kr^2/n} \leq 1-(1-8kr^2\hat{q}/n)=8kr^2\hat{q}/n\leq \frac{32kr^2}{n^2}$, (first inequality is according to Bernoulli's inequality).
		By assigning $r=\frac{n}{8\sqrt{k}}>n/k$, we get that
		$$\Pr\left[r_1 > \frac{n}{8\sqrt{k}} \ \middle|\ A_r, r_0<\frac{n}{\sqrt{k}}, r_1<\frac{n}{24}\right] \geq 1-\frac{32k{(n/8\sqrt{k})}^2}{n^2}\geq \frac{1}{2} \enspace ,$$
		and hence
		\begin{align*}
		\Pr&\left[r_1 > \frac{n}{8\sqrt{k}} \ \middle|\  r_0<\frac{n}{\sqrt{k}}, r_1<\frac{n}{24}\right]   \geq \\
		& \geq \Pr\left[r_1 > \frac{n}{8\sqrt{k}}, A_r \ \middle|\  r_0<\frac{n}{\sqrt{k}}, r_1<\frac{n}{24}\right] = \\
		&= \Pr\left[r_1 > \frac{n}{8\sqrt{k}} \ \middle|\ A_r, r_0<\frac{n}{\sqrt{k}}, r_1<\frac{n}{24}\right]\cdot \Pr\left[A_r \ \middle|\  r_0<\frac{n}{\sqrt{k}}, r_1<\frac{n}{24}\right] \geq \\
		&\geq \frac{1}{2}\cdot\frac{1}{2} = \frac{1}{4} \enspace ,
		\end{align*}
		the first equality is according to the law of conditional probability, $P(A \cap B) = P(A|B)P(B)$.
		This gives
		$$E\left[r_1 \ \middle|\ r_0<\frac{n}{\sqrt{k}}, r_1<\frac{n}{24}\right] \geq \frac{n}{32\sqrt{k}} \enspace ,$$
		which proves \eqref{eq:3}.
		
		\item If $|M_1|=n$ at $r_0$,
		let $r_2$ be the random variable of the number of rounds since $r_0$ until all messages are disseminated in $C_{n/k}$.
		We argue that $E[r_0+r_2]\geq \frac{n}{32\sqrt{k}}$.
		Since $r_0$ and $r_2$ are non-negative, the following hold trivially:
		\begin{equation}
		E\left[r_0+r_2 \ \middle|\ r_0 < \frac{n}{\sqrt{k}} , r_2\geq\frac{n}{24}\right] \geq \frac{n}{32\sqrt{k}} \enspace .
		\end{equation}
		To conclude the argument for this case, it is enough to show that
		\begin{equation}
		\label{eq:6}
		E\left[r_0+r_2 \ \middle|\ r_0<\frac{n}{\sqrt{k}}, r_2<\frac{n}{24}\right]\geq \frac{n}{32\sqrt{k}} \enspace .
		\end{equation}
		
		In the following, we assume that $r_0<n/\sqrt{k}$ and $r_2<n/24$, and give a lower bound for $E\left[r_2 \ \middle|\ r_0<\frac{n}{\sqrt{k}}, r_2<\frac{n}{24}\right]$.
		In addition, we assume that $k\leq n/6$.
		In each round, a node in $C_{n/k}$ receives at most $k$ new messages and sends one, and hence the buffer size can increase by at most $k-1$ in a single round. By its definition, at round $r_0$ there exists a node in $C_{n/k}$ with buffer size at most $n/2+k-1\leq 2n/3$, implying that the number of disseminated messages in $C_{n/k}$ is at most $n/2+r_0+k-1\leq n/4$.
		At round $r_0$, at least $n/4$ messages are not disseminated in $C_{n/k}$ but are known to some nodes of the clique. Denote the set of these messages by $M_2$. Messages in $M_2$ were not  received from nodes within the clique $C_{n/k}$ (otherwise, there are disseminated), which means that at round $r_0$ every node in $C_{n/k}$ knows at most $r_0<n/\sqrt{k}$ such messages.
		During the $r_2$ rounds in the interval $[r_0, r_0+r_2]$, all buffers in all nodes in $C_{n/k}$ are of size at least $n/2-r_2\geq n/2-n/\sqrt{k}-n/24 \geq n/4$.
		Since buffer sizes are at least $n/4$, the probability $\hat{q}$ for each node in $C_{n/k}$ to send a message from $M_2$ in a single round is at most $\frac{n}{\sqrt{k}}/\frac{n}{4}=\frac{4}{\sqrt{k}}$.
		In order for the dissemination process to complete, each message in $M_2$ must be sent at least once by some node in $C_{n/k}$ (or be sent by all nodes of $C_{n/k-1}$, which happens only after $\Omega(n)$ rounds in expectation, by~\autoref{lemma:linear_buff}).
		By considering the sending of a message from $M_2$ a success, which occurs with probability at most $\hat{q}$, the dissemination process completes after at least $|M_2|\geq n/4$ successes. Denote by $X$ the random variable of number of trials before reaching $|M_2|$ successes. $X$ is a negative binomial variable, $X\sim NB(|M_2|,\hat{q})$, with expectation of $|M_2|/\hat{q}\geq \frac{n}{4}/\frac{4}{\sqrt{k}}= n\sqrt{k}$ trials.
		In each round, the number of trials is $k$ (one trial per node of the clique), and hence, the expected number $r_2$ of additional rounds before all messages are disseminated in $C_{n/k}$ is at least $n\sqrt{k}/k=\frac{n}{\sqrt{k}}$ in expectation. We get that
		$$	E\left[r_2 \ \middle|\ r_0<\frac{n}{\sqrt{k}}, r_2<\frac{n}{24}\right] \geq \frac{n}{\sqrt{k}} \enspace ,
		$$
		which proves~\eqref{eq:6}.
	\end{enumerate}
	
	In summary, we covered the whole space by combinations of events that form a partition, proved that the conditional expectation in each case is $\Omega(n/\sqrt{k})$, and hence by the law of total expectation, the theorem follows. \qed
\end{proof}

\subsection{Missing Proofs}
\label{sec:proofs}

\begin{figure}
	\centering
	\includegraphics[width=250pt]{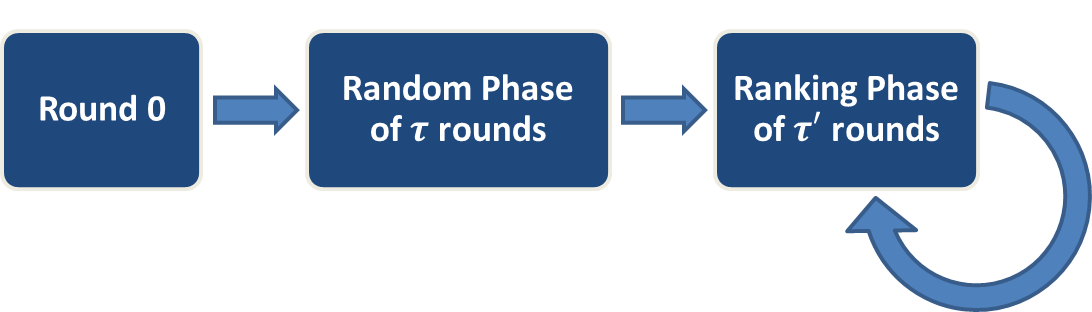}
	\caption{Phases of \autoref{alg:1}.}
	\label{fig:phases}
\end{figure}

\begin{lemma-repeat}{lemma:ranking}
	
\end{lemma-repeat}

\begin{proof}
	Let $A$ be the event that the message with rank $r$ is \emph{not} picked during a phase of $\tau^\prime=8 d \tau \log^{2} n=8 d \alpha \log^{3} n$ rounds.
	We wish to bound from above the probability for event $A$:
	\begin{equation*}
	\begin{split}
	\Pr[A] & \leq \left(1-  \frac{1}{ r \cdot H_b} \right)^{\tau^\prime}  \leq \left(1-  \frac{1}{ r \cdot (\ln b+1)} \right)^{\tau^\prime}  \leq \left(1-  \frac{1}{ r \cdot \log b} \right)^{\tau^\prime}  \leq \\
	& \leq \left(1- \frac{1}{r\log n}\right)^{8 d \alpha \log^{3} n} \leq \left(1- \frac{1}{r\log n}\right)^{(r\log n)\frac{1}{r}8 d \alpha\log^{2} n} \leq \\
	& \leq \left(\frac{1}{2}\right)^{\frac{1}{r}8 d \alpha\log^{2} n} = \left(\frac{1}{n}\right)^{\frac{1}{r}8 d \alpha\log n} \enspace .
	\end{split}
	\end{equation*}
	The second inequality holds because $H_n \leq \ln(n)+1$. The  last inequality holds since $(1-1/x)^x \leq e^{-1}<1/2$ for $x>0$.
	Namely, any message with $r\leq 8 \tau=8 \alpha\log n$ is sent during the phase with probability at least $1-\frac{1}{n^d}$.
	\qed
\end{proof}

\begin{lemma-repeat}{lemma:rand_pioneer}
	
\end{lemma-repeat}
\begin{proof}
According to \emph{pioneer} definition, considering the direction of the flow of messages, cliques $C_1$ and $C_2$ could not have pioneer messages.
Fix $i$, $3\leq i \leq n/k$.
By \autoref{proposition:init_buff}, at the beginning of the random phase, for every node $u\in C_{i-1}$, buffer $\hat{B}_{u}(t_0)$ contains exactly one unique message $m_v, v\in C_{i-2}\cap L(u)$, and it holds that $|\hat{B}_u(t_0+t')|= k+1-t'$ during the random phase (as $C_{i-1}$ is an inner clique).

Let $\mathbbm{1}_{u}$, for every $u\in C_{i-1}$, be an indicator variable that indicates whether node $u$ sends its unique message during the random phase, or not.
Then
\begin{equation*}
\begin{split}
\Pr[\mathbbm{1}_{u}=1] & = 1-Pr[\mathbbm{1}_{u}=0] = 1-\prod_{t'=0}^{\tau-1} \frac{k-t'}{k+1-t'} = \\
& = 1-\frac{k+1-\tau}{k+1} = 1-\left(1-\frac{\tau}{k+1} \right) = \frac{\tau}{k+1} \enspace .
\end{split}
\end{equation*}
Let $X_{i-1}=\sum_{u\in C_{i-1}} \mathbbm{1}_{u}$, be the number of messages $m_v, v\in C_{i-2},$ that reach clique $C_i$ by the end of the random phase.
Then
\begin{equation*}
\begin{split}
\mu =E(X_{i-1}) &=E\left(\sum_{u\in C_{i-1}} \mathbbm{1}_{u}\right)=\sum_{u\in C_{i-1}} E(\mathbbm{1}_{u}) = \\
& = \sum_{u\in C_{i-1}} \frac{\tau}{k+1} = k\cdot\frac{\tau}{k+1} \enspace,
\end{split}
\end{equation*}
which means that $\tau/2 \leq \mu \leq \tau $.
The indicator variables are independent, as they refer to decisions of distinct nodes.
By applying a Chernoff bound, we get
\begin{equation*}
\begin{split}
\Pr[X_{i-1}>(3/2)\tau]  & \leq
\Pr[X_{i-1}\geq(3/2)\mu]\leq \Pr[X_{i-1}\geq(1+\delta)\mu]\leq \\
& \leq \exp\left(\frac{-\delta^2 \cdot \mu}{3}\right) \leq\exp\left(\frac{-\delta^2 \cdot (\tau/2)}{3}\right) \leq\\
&\leq \exp\left(\frac{-\delta^2 \cdot\alpha\log n}{6}\right) < \frac{1}{n^{\frac{\alpha\delta^2}{6}}} \enspace .
\end{split}
\end{equation*}
By setting $\delta=\frac{1}{2}$, we get that the number of pioneer messages, $X_{i-1}$, that reach~$C_i$ from one direction is~$\leq (3/2) \tau$ with probability at least $1- \frac{1}{n^{\alpha/24}}$.
By a union bound, this holds for both directions and every clique with probability at least $1-\frac{1}{n^{\alpha/24-1}}$.
\qed
\end{proof}

\begin{lemma-repeat}{lemma:h_send}
	
\end{lemma-repeat}
\begin{proof}
Fix a node $u$.
All fresh messages $m_v\in R_u(t)$ have rank $r \leq 8 \tau$.
According to \autoref{lemma:ranking}, a message with rank $r\leq 8 \tau$ is sent during a ranking phase with probability at least $1-\frac{1}{n^d}$.
By a union bound, the probability for node $u$ to send all of its fresh messages during the phase is bounded by $1-\frac{1}{n^d}\cdot 8 \tau\geq1-\frac{1}{n^{d-1}}$.
We use a union bound once more to bound the probability that this happens for every node $u$ by $1-\frac{1}{n^{d-1}}\cdot n=1-\frac{1}{n^{d-2}}$.
\qed
\end{proof}

\subsubsection{Property 1.}
For every $i, 1\leq i\leq\frac{n}{k}$, it holds that the number of messages $m_v$, $v\in C_{i-p-1}$, such that $m_v \in R_u(t_{p})$ for some $u\in C_i$ (pioneers), is at most $3 \tau$, and each reaches a distinct node $u\in L(v)$.

\subsubsection{Property 2.}
For every $i, 1\leq i\leq\frac{n}{k}$, and every node $u\in C_i$, it holds that at time $t_{p}$ there are at most $4 \tau$ \emph{fresh} messages $m_v$ for node $u$ for every one of the two directions of flow ($8 \tau$ in total).
All of them originated at nodes $v\in C_{i-p}$ (similarly, $v\in C_{i+p}$), except for at most one (a pioneer) which originated at $u'\in C_{i-p-1} \cap L(u)$ (similarly, $u'\in C_{i+p+1} \cap L(u)$).
All messages $m_v\in R_u(t_{p}), v\in C_{i-p}$ (similarly, $v\in C_{i+p}$), are fresh.

\subsubsection{Property 3.}
For every $i, 1\leq i\leq\frac{n}{k}$, and every node $v\in C_{i-p}$, it holds that $m_v$ is fresh for at least $T$ nodes $u\in C_i$ at time $t_{p}$.
Recall that $T=\frac{1}{2}\tau$.

\subsubsection{Property 4.}
For every $i, 1\leq i\leq\frac{n}{k}$, every node $u\in C_i$, and every node $v$ such that $v\in C_j$ for some $i-p\leq j \leq i$,
it holds that $m_v\in R_u(\bar{t}_p)$, and $m_v$ is non-fresh.

\paragraph{}
We prove the four properties simultaneously by induction on the ranking phase number, $p$.
To prove the base cases, we assume that all events described in \autoref{lemma:known_promotes}, \autoref{lemma:rand_pioneer}, and \autoref{lemma:h_send} (for $p=1$) occur. Notice that, by a union bound, the probability for this is at least $1-\left(\frac{1}{n^{\alpha/24-1}}+\frac{1}{n^{\alpha/48-1}}+\frac{1}{n^{d-2}}\right)\geq 1-\left(\frac{2}{n^{\alpha/48-1}}+\frac{2}{n^{d-2}}\right)$.

\begin{proof}[Base case for Property 1]
Let $p=1$.
One random phase precedes the first ranking phase.
The upper bound on the number of pioneers in every clique holds according to \autoref{lemma:rand_pioneer}.
The distribution among distinct layers is immediate according to Attribute~\ref{pio:layer} of pioneer messages.
\qed
\end{proof}

\begin{proof}[Base case for Property 2]
Let $p=1$.
Fix some node $u\in C_i$.
We analyze possibilities for fresh messages for one direction of flow at the end of the random phase, and the other direction is symmetric.
By \autoref{proposition:p_distance}, messages $m_v\in R_u(t_1)$ originate at nodes $v\in {C_{i-2} \cup C_{i-1} \cup C_{i}}$.
\begin{inlinelistroman}
\item[] A message $m_v\in R_u(t_1)$ that originates at node $v\in C_{i-2}$ is a pioneer.
By Attributes~\ref{pio:layer} and~\ref{pio:cnt1} there can be at most one such message, and it is fresh.
\item[] For messages $m_v\in R_u(t_1)$ that originate at nodes $v\in C_{i-1}$ there are two possibilities.
One possibility is that they are received from the neighbor $u' \in C_i \cap L(v)$, which implies that they are pioneers in nodes $u_1\in C_{i+1}\cap L(v)$ at time $t_1$.
By Property~1 for $p=1$ (which is already proved),
there are at most $3 \tau$ such messages.
The only other possibility is that they are received from the neighbor $u' \in C_{i-1} \cap L(u)$. There are at most $\tau$ such messages (which might include one that originates at $C_{i-2}$, as already discussed), and they are all fresh.
\item[] Messages $m_v\in R_u(t_1)$ that originate at nodes $v\in C_{i}$ are all non-fresh, according to \autoref{lemma:known_promotes}.
\end{inlinelistroman}

In total,
at the beginning of the first ranking phase, each node $u$ has at most $4 \tau$ fresh messages from the one direction.
All of them originated at nodes $u'\in C_{i-1}$, except for at most one which originated at $u'\in C_{i-2} \cap L(u)$.
All messages that originated at nodes $u'\in C_{i-1}$ are fresh.
The other direction of flow is symmetric.
\qed
\end{proof}

\begin{proof}[Base case for Property 3]
Let $p=1$.
For every $v\in C_{i-1}$, at the end of round 0, exactly one node $u\in C_i$ knows $m_v$. It may disseminate it during the random phase.
At the end of the random phase, by \autoref{lemma:known_promotes},
for every $v\in C_{i-1}, m_v$ is non-fresh in all nodes of $C_{i-1}$. That is, by the end of the random phase, every node $v'\in C_{i-1}, v'\neq v,$ receives $m_v$ at least $T$ times,
all from nodes within the clique.
Therefore, at least $T$ nodes in $C_{i-1}$ send $m_v$ in the random phase, which implies that at least $T$ nodes in $C_i$ know $m_v$.
According to the phase separation property, every such node in $C_i$ receives $m_v$ at most twice (from the neighbor in $C_{i-1}$, and possibly from the neighbor $u\in C_i$), so it is fresh.
\qed
\end{proof}

\begin{proof}[Base case for Property 4]
Let $p=1$.
Fix $i$, $u\in C_i$.
According to \autoref{lemma:known_promotes},
it holds that for every node $v\in C_i$, $m_v$ is known and non-fresh in $u$.

At the beginning of the first ranking phase, according to Property 3 for $p=1$ and $i$, it holds
that every message $m_v, v\in C_{i-1}$, is fresh in at least $T$ nodes in $C_i$.
According to Property 2 for $p=1$, it holds
that every node has at most $8\tau$ fresh messages.
By \autoref{lemma:h_send},
all nodes (in particular, nodes in $C_i$) send all of their fresh messages.
This means that every message $m_v, v\in C_{i-1}$, is received by node $u$ at least $T$ times so it becomes non-fresh.
\qed
\end{proof}
This completes the proof of the base cases.
Recall that the base cases are proved by assuming that all events described in \autoref{lemma:known_promotes}, \autoref{lemma:rand_pioneer}, and \autoref{lemma:h_send} (for $p=1$) occur.
Thus, the properties are proved for $p=1$ with probability at least $1-\left(\frac{2}{n^{\alpha/48-1}}+\frac{2}{n^{d-2}}\right)$.

To prove the induction step, we assume that all events described in the four properties for $p-1$, and in \autoref{lemma:h_send} for $p-1$ and $p$, occur.
This happens with probability at least $1-\left(\frac{2}{n^{\alpha/48-1}}+\frac{2(p-1)}{n^{d-2}}+\frac{1}{n^{d-2}}+\frac{1}{n^{d-2}}\right)=1-\left(\frac{2}{n^{\alpha/48-1}}+\frac{2p}{n^{d-2}}\right)$.
\begin{proof}[Induction step for Property 1]
By Property 1 for $p-1$ and $i-1$,
at the beginning of ranking phase $p-1$, the number of messages $m_v$, $v\in C_{i-p-1}$, that reach nodes in $C_{i-1}$ is at most $3 \tau$, each reaches a distinct node $u\in C_{i-1} \cap L(v)$.
At time $t_{p-1}$, by Pioneer Attribute~\ref{pio:cnt1}, each one of them is fresh.
By Property 2 for $p-1$ and $i-1$,
at the beginning of ranking phase $p-1$, every node $u\in C_{i-1}$ has at most $8 \tau$ \emph{fresh} messages.
By \autoref{lemma:h_send} for $p-1$,
every node sends all of its fresh messages during ranking phase $p-1$ (in particular, pioneer messages in nodes in $C_{i-1}$).
Thus,
it holds that the number of messages $m_v$, $v\in C_{i-p-1}$, such that $m_v$ is a pioneer at time $t_p$ in nodes of $C_i$, is at most $3 \tau$, and each reaches a distinct node $u\in C_{i} \cap L(v)$.
\qed
\end{proof}

\begin{proof}[Induction step for Property 2]
Fix a node $u\in C_i$.
By \autoref{proposition:p_distance}, for every message $m_v\in R_u(t_p)$ (known to $u$ at the beginning of ranking phase $p$) it holds that $v\in{\bigcup \limits _{j \in \{i-p-1, \ldots, i\}} C_j}$.
By Property 4 for $p-1$ and $i$,
for every node $v$ such that $v\in C_j$ for some $i-p+1\leq j \leq i$,
it holds that $m_v\in R_u(\bar{t}_{p-1})$, $m_v$ non-fresh.
Thus, only messages $m_v, v\in C_{i-p-1} \cup C_{i-p}$ can be fresh.

Consider a message $m_v, v\in C_{i-p}$:
By Property 1 for $p-1$ and $i$,
at the beginning of ranking phase $p-1$, any message $m_v, v\in C_{i-p}$, that reach $C_i$ (a pioneer) is known to exactly one node in the clique.
Thus, any message $m_v, v\in C_{i-p}$, that reaches $C_i$ by the beginning of ranking phase $p$ is fresh (because it could be received only once from a neighbor within the clique $C_i$ and once from a neighbor in clique $C_{i-1}$, i.e., it is received at most twice).

By Property 2 for $p-1$ and $i-1$,
at the beginning of ranking phase $p-1$, node $u'\in C_{i-1} \cap L(u)$ has at most $4 \tau$ \emph{fresh} messages (consider relevant direction of flow),
all of them originated at nodes $v\in C_{i-p}$, except for at most one which originated at $u'\in C_{i-p-1} \cap L(u)$ (a pioneer).
According to \autoref{lemma:h_send} for $p-1$,
every node $u'$ sends all of its fresh messages during ranking phase $p-1$.
Thus, at the end of ranking phase $p-1$ (beginning of ranking phase $p$), they all reach $u$,
and they are all fresh. In particular, they are received at most twice, according to the previous discussion.
The opposite direction of flow is symmetric.
This completes the proof.
\qed
\end{proof}

\begin{proof}[Induction step for Property 3]
By Property 3 for $p-1$ and $i-1$,
at the beginning of ranking phase $p-1$, every message $m_v$, $v\in C_{i-p}$, is fresh in at least $T$ nodes $u'\in C_{i-1}$.
By Property 2 for $p-1$ and $i-1$,
at the beginning of ranking phase $p-1$, every node $u\in C_{i-1}$ has at most $8 \tau$ \emph{fresh} messages.
According to \autoref{lemma:h_send} for $p-1$,
all are sent during ranking phase $p-1$, each of the nodes $u\in C_{i-1}$ sends to a distinct neighbor node $u\in C_i$.
Therefore, at the end of ranking phase $p-1$ (beginning of ranking phase $p$), every message $m_v$, $v\in C_{i-p}$, is known to at least $T$ nodes $u\in C_i$.
By Property 2 for $p$ (which is already proved) and $i$,
all messages $m_v$, $v\in C_{i-p}$ known in $C_i$ are fresh,
which completes the proof.
\qed
\end{proof}

\begin{proof}[Induction step for Property 4]
By Property 4 for $p-1$ and $i$,
for every node $u\in C_i$, every node $v$ such that $v\in C_j$ for some $i-p+1\leq j \leq i$,
it holds that $m_v\in R_u(\bar{t}_{p-1})$, and $m_v$ is non-fresh.
This holds also at the end of ranking phase $p$.
We still need to show that the property holds for all message $m_v$, $v\in C_{i-p}$.
Notice that Properties 1,2 and 3 are already proved for $p$.

By Property 3 for $p$ and $i$,
at the beginning of ranking phase $p$, every message $m_v$, $v\in C_{i-p}$, is fresh in at least $T$ nodes $u\in C_i$.
By Property 2 for $p$ and $i$,
at the beginning of ranking phase $p$, every node $u\in C_i$ has at most $8 \tau$ fresh messages.
By \autoref{lemma:h_send} for $p$,
all are sent during ranking phase $p$.
This means that every message $m_v$, $v\in C_{i-p}$, is sent by at least $T$ nodes of the clique $C_i$.
This implies that every message $m_v, v\in C_{i-p}$, is received by every node $u\in C_i$ at least $T$ times.
Thus, at the end of ranking phase $p$, every message $m_v$, $v\in C_{i-p}$ is known and non-fresh in all nodes $u\in C_i$,
which completes the proof.
\qed
\end{proof}

Property 4 guarantees that full information spreading is completed after ranking phase $p=n/k$, with probability at least $1-\left(\frac{2n/k}{n^{d-2}}+\frac{2}{n^{\alpha/48-1}}\right)\geq 1-\left(\frac{1}{n^{d-3}}+\frac{1}{n^{\alpha/48-2}}\right) \geq 1-\frac{1}{n^c}$, for a constant $c$, by fixing $d$ and $\alpha$ to values $d>c+3, \alpha >48c+96$.
This completes the proof of \autoref{lemma:iteration}, from which \autoref{th:main1} follows.

\begin{theorem-repeat}{thm:alg2_time}
	
\end{theorem-repeat}

Proving the four properties for the modified algorithm implies \autoref{lemma:iteration}, from which \autoref{thm:main2} follows.
In the previous analysis, the transition from the end of a ranking phase to the beginning of the next one was immediate, therefore claims that hold at end of ranking phase $p-1$, automatically hold at the beginning of ranking phase $p$.
Here, every two consecutive ranking phases are separated by a shuffle phase,
implying that $\bar{t}_{p-1}$ and $t_{p}$ are not equal anymore.
We need to prove that the relevant claims that hold at the beginning of a shuffle phase (end of a ranking phase) hold also at the end of the shuffle phase (beginning of the next ranking phase).
That is, we prove that shuffle phases preserve the required properties.
The addition of the shuffle phase does not affect the progress of the algorithm until the end of the first ranking phase.
Thus, the base case in the inductive proof of the four properties stays as is.
Modifications are needed to the proofs of inductive steps.

Before heading to modify the proof of the induction step, we first prove the following.
\begin{lemma}
	\label{lemma:k_per_side}
	Assume properties 2, 3 and 4 hold at the end of ranking phase $p-1$.
	Then, for each node $u\in C_i$,
	and for each direction of flow, at the end of shuffle phase~$p-1$, there are $k$ remaining messages $m_v, v\in C_{i-p}$ (similarly $C_{i+p}$) in $R$ after filtering out unwanted messages (in line~\ref{alg2:lineR}).
\end{lemma}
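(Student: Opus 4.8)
The plan is to follow the at most $8\tau$ fresh, not-yet-sent messages held by a fixed node $u\in C_i$ at time $\bar t_{p-1}$ (the start of shuffle phase $p-1$) through the $8\tau$ rounds of the shuffle phase, and to check that each of the $k$ messages originating in $C_{i-p}$ clears both filters applied in line~\ref{alg2:lineR}. Since a clique has exactly $k$ nodes there are exactly $k$ such messages, so proving that all of them stay in $R$ yields precisely $k$; the argument for $C_{i+p}$ is the mirror image, and if $i-p<1$ the statement is vacuous for that direction.

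First I would pin down $\hat B_u(\bar t_{p-1})$. By Property~4 for $p-1$ every message originating in $C_{i-p+1},\dots,C_i$ is already non-fresh in $u$ at $\bar t_{p-1}$, and by the phase-separation property together with \autoref{lemma:h_send} for $p-1$ the messages that are fresh \emph{and} not yet sent in $u$ at $\bar t_{p-1}$ are exactly those $u$ received during ranking phase $p-1$ from its two clique-neighbours in $C_{i-1}$ and $C_{i+1}$ (the fresh messages $u$ held at $t_{p-1}$ were all sent during ranking phase $p-1$, and the non-fresh ones are excluded by definition of $\hat B$). Using Property~4 for $p-1$ once more to discard what $u$ receives in the opposite direction (already non-fresh at $u$), only the at most $4\tau$ fresh messages that each of the two neighbours carries in the relevant direction at $t_{p-1}$ (Property~2 for $p-1$, indices $i-1$ and $i+1$) can end up fresh in $u$, and by \autoref{lemma:h_send} each neighbour sends all of them during ranking phase $p-1$; hence $|\hat B_u(\bar t_{p-1})|\le 8\tau$, so in its loop $u$ pops and sends each message of $\hat B_u(\bar t_{p-1})$ exactly once.

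Now fix $m_v$ with $v\in C_{i-p}$. By Property~3 for $p-1$ at index $i-1$, $m_v$ is fresh in at least $T$ nodes $u'\in C_{i-1}$ at time $t_{p-1}$; by Property~2 and \autoref{lemma:h_send} for $p-1$ each such $u'$ sends $m_v$ during ranking phase $p-1$ and relays it across the $C_{i-1}$--$C_i$ matching to its distinct partner in $C_i\cap L(u')$, so at least $T$ distinct nodes of $C_i$ receive $m_v$ during ranking phase $p-1$. By \autoref{proposition:p_distance} and Pioneer Attribute~\ref{pio:unique} at most one node $w$ of $C_i$ knew $m_v$ before ranking phase $p-1$ (as a pioneer), so every other recipient learns $m_v$ for the first time during that phase, receives it at most twice (once across the matching, and at most once from $w$ within the clique), and — by phase separation — does not send it during ranking phase $p-1$; thus $m_v$ is fresh and unsent in at least $T-1$ nodes of $C_i$ at $\bar t_{p-1}$, and each of them pops and sends $m_v$ once during shuffle phase $p-1$. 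Consequently, for every $u\in C_i$: being adjacent in the clique to at least $T-2$ of these senders, $u$ ends the phase with $phasecnt_{u,v}\ge T-2\ge \hat c\,T$ once $0<\hat c\le\tfrac12$, which clears the first filter; and since the shuffle phase leaves the ordinary $cnt$ counters untouched, $cnt_{u,v}(\bar t_{p-1})<T$ for every $u\in C_i$ (it is $0$ for nodes that have not yet received $m_v$, and at most $2$ otherwise, $w$ included), so $m_v$ was fresh prior to the phase and clears the second filter too. Hence $m_v$ remains in $R$ at $u$; ranging over all $v\in C_{i-p}$ puts all $k$ of these messages in the filtered $R$, and no more can appear since $|C_{i-p}|=k$, while the $C_{i+p}$ side is symmetric.

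I expect the bookkeeping of the second step above (pinning down $\hat B_u(\bar t_{p-1})$) to be the main obstacle: one has to transport Properties~2 and~4 cleanly across the transition $t_{p-1}\to$ ranking phase $p-1\to\bar t_{p-1}$, arguing that the messages fresh at $t_{p-1}$ have all either been sent or become non-fresh, so that $\hat B_u(\bar t_{p-1})$ is exactly the set of the $\le 8\tau$ newly received messages; and, hand in hand with this, one must verify that the ``non-fresh prior to the start of the phase'' filter never discards a genuine $C_{i-p}$ message. The latter is precisely the bound $cnt_{u,v}(\bar t_{p-1})<T$ used above, and it rests on the observation that during ranking phase $p-1$ such a message can reach a node of $C_i$ only across the single $C_{i-1}$--$C_i$ matching edge (at most once), plus at most one extra reception — at the pioneer node $w$ — from its matched neighbour in $C_{i-1}$.
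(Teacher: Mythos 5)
Your argument for the positive half of the lemma --- that every $m_v$ with $v\in C_{i-p}$ survives both filters at every $u\in C_i$ --- is sound, and in fact more laborious than the paper's: the paper reads the hypothesis ``Properties 2, 3 and 4 hold at the end of ranking phase $p-1$'' as directly describing the state at $\bar t_{p-1}$ (each $m_v$, $v\in C_{i-p}$, already fresh in at least $T$ nodes of $C_i$, and each node holding at most $4\tau$ fresh messages per direction), so it only has to observe that in the $8\tau$ rounds of the shuffle phase each holder sends $m_v$ once, giving every node $phasecnt_{u,v}\ge T$ (your $T-1$/$T-2$ accounting is a harmless refinement), while $cnt_{u,v}<T$ keeps $m_v$ past the freshness filter. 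You instead start from the properties at $t_{p-1}$ and re-run ranking phase $p-1$ via \autoref{lemma:h_send}, which amounts to re-proving the induction steps for Properties 2 and 3 that the paper establishes separately; this is correct but duplicates work the hypothesis is meant to hand you for free.

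The genuine gap is in the ``only them'' half. The lemma --- and its use in \autoref{lemma:shuffle_threshold}, where the bound of at most $2k$ messages remaining in $R$ is exactly what makes the selection probability $\frac{4\tau}{2k}$ work --- requires that per direction \emph{exactly} the $k$ messages originating in $C_{i-p}$ survive the filtering. You dispose of messages from $C_{i-p+1},\dots,C_i$ via Property 4 and the non-fresh filter, but you never address messages originating in $C_{i-p-1}$ (or farther): a pioneer held by a single node of $C_i$ at $\bar t_{p-1}$, and the fresh messages that $u$ receives during the shuffle phase itself from its matched neighbour in $C_{i-1}$, which by Property 2 applied to $C_{i-1}$ all originate in $C_{i-p-1}$ or beyond. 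These are fresh, so the ``non-fresh prior to the phase'' filter does not remove them; they must be killed by the $phasecnt$ filter, and the missing one-line argument is that each such message is received at most twice during the shuffle phase (once across the matching and at most once from the unique clique node holding it), so $phasecnt_{u,v}\le 2<\hat c\,T$. Your closing remark that ``no more can appear since $|C_{i-p}|=k$'' only bounds the number of $C_{i-p}$-messages and does not exclude these extras; adding the sentence above closes the gap and recovers the paper's conclusion that all messages from $C_{i-p}$ remain, and only them.
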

\begin{proof}
	Properties 2 and 3 hold at the end of ranking phase $p-1$,
	i.e., at the beginning of shuffle phase $p-1$, for every node $v\in C_{i-p}$, it holds that $m_v$ is fresh in at least $T$ nodes $u\in C_i$,
	and that every node in $C_i$ has at most $4\tau$ fresh messages per direction.
	Thus, during the shuffle phase, every message $m_v, v\in C_{i-p}$, is sent (and thus, received) at least $T$ times by nodes of $C_i$, and therefore is not filtered out at the end of the shuffle phase.
	As already discussed, messages that originate at $m_v, v\in C_{i-p-1}$ are filtered out due to low $phasecnt$ values.
	By property 4 for end of ranking phase $p-1$,
	for every node $v$ such that $v\in C_j$ for some $i-p+1\leq j \leq i$,
	it holds that $m_v$ non-fresh, so they are filtered out.
	In total, all messages $m_v, v\in C_{i-p}$, are not filtered out, and only them.
	The other direction of flow is symmetric.
	\qed
\end{proof}

\begin{lemma}
	\label{lemma:shuffle_threshold}
	Assume properties 2, 3 and 4 hold at the end of ranking phase $p-1$.
	Then,
	with probability at least $1-\frac{1}{n^{9\alpha/16-1}}$,
	at the end of shuffle phase $p-1$,
	every message that is not filtered out in node $u\in C_i$, is selected to be fresh by at least $T$ nodes in $C_i$.
\end{lemma}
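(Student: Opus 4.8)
The plan is to reduce the statement to a single lower-tail Chernoff bound on the number of nodes of $C_i$ that pick a fixed surviving message in the final random selection of the shuffle phase. First I would invoke \autoref{lemma:k_per_side}: under the hypothesis that Properties 2, 3 and 4 hold at the end of ranking phase $p-1$, the set $R$ held by every node $u\in C_i$ after the filtering in line~\ref{alg2:lineR} consists of exactly the $k$ messages $m_v$ with $v\in C_{i-p}$ together with the $k$ messages $m_v$ with $v\in C_{i+p}$ (only one of these two groups for the two boundary cliques). Hence $|R|\le 2k$, and crucially this set is the \emph{same} for all $k$ nodes of $C_i$, so ``selected by at least $T$ nodes of $C_i$'' is well defined. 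Since $k=\omega(\log^3 n)$ while $4\tau=4\alpha\log n$, we also have $4\tau<|R|$, so the uniform choice of $4\tau$ messages from $R$ is meaningful.

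Next, fix a message $m_v$ that is not filtered out, i.e. $m_v\in R$. For each node $u\in C_i$ let $\mathbbm{1}_u$ indicate that $u$ selects $m_v$ among its $4\tau$ chosen messages. By symmetry of uniform sampling without replacement, $\Pr[\mathbbm{1}_u=1]=4\tau/|R|\ge 4\tau/(2k)=2\tau/k$, and the indicators $\{\mathbbm{1}_u\}_{u\in C_i}$ are independent, since the selections are performed independently by distinct nodes. Therefore $Y_v:=\sum_{u\in C_i}\mathbbm{1}_u$ has $\mu:=E[Y_v]\ge k\cdot(2\tau/k)=2\tau$.

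Then I would apply the Chernoff bound. We want $Y_v\ge T=\tau/2=(1-\delta)\cdot 2\tau$ with $\delta=3/4$; since $\mu\ge 2\tau$, monotonicity of the tail gives $\Pr[Y_v\le T]\le\Pr[Y_v\le(1-\delta)\mu]\le\exp(-\delta^2\mu/2)\le\exp(-9\tau/16)=n^{-9\alpha/16}$. Finally, a union bound over the at most $n$ messages of the network yields that every message that is not filtered out at a node of $C_i$ is selected by at least $T$ nodes of $C_i$ with probability at least $1-n\cdot n^{-9\alpha/16}=1-n^{-(9\alpha/16-1)}$, as claimed.

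The step needing the most care is the bookkeeping that a message picked by $T$ nodes is indeed \emph{fresh} for them at the start of the next ranking phase, rather than just known. This uses that the shuffle phase increments only the auxiliary counters $phasecnt$ and never the global $cnt$ values, so a message $m_v$ with $v\in C_{i-p}$ — which by Property 2 for $p-1$ together with \autoref{proposition:p_distance} has been received only a few times and hence has $cnt_{u,v}<T$ up to $\bar t_{p-1}$ — still has $cnt_{u,v}<T$ when $u$ re-adds it to $R_u$ and ranks it. I would also explicitly dispatch the two boundary cliques, where $R$ contains only one of the two groups and $|R|$ may be as small as $k$, which only increases the selection probability and hence only helps.
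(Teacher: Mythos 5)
Your proposal is correct and follows essentially the same route as the paper's own proof: it invokes \autoref{lemma:k_per_side} to bound $|R|\le 2k$, defines per-node indicator variables with selection probability at least $2\tau/k$, applies a lower-tail Chernoff bound with $\delta=3/4$ to get failure probability $n^{-9\alpha/16}$ per message, and finishes with a union bound over messages. The extra bookkeeping you add (well-definedness of the common set $R$, the freshness of selected messages, and the boundary cliques) is sound and only makes explicit what the paper leaves implicit.
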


\begin{proof}
	Assume properties 2, 3 and 4 hold at the end of ranking phase $p-1$.
	Fix $i, v$.
	Let $\mathbbm{1}_{u,v}$, for every $u\in C_i$, be indicator variables that indicate whether node $u$ selects $m_v$ at the end of shuffle phase $p-1$, or not.
	By \autoref{lemma:k_per_side}, there are at most $2k$ remaining messages in $R$ after filtering out unwanted messages (in line~\ref{alg2:lineR}).
	Thus, the probability for each message to be within the $4\tau$ selected messages at the end of the shuffle phase is at least
	\begin{equation*}
	\begin{split}
	\Pr[\mathbbm{1}_{u,v}=1] & \geq  \frac{4\tau}{2k} = \frac{2\tau}{k} \enspace .
	\end{split}
	\end{equation*}
	Let $X_{v}=\sum_{u\in C_i} \mathbbm{1}_{u,v}$, be the number of nodes in $C_i$ that select message $m_v$ at the end of shuffle phase $p-1$.
	Then
	\begin{equation*}
	\begin{split}
	\mu =E(X_{v}) &=E\left(\sum_{u\in C_i} \mathbbm{1}_{u,v}\right)=\sum_{u\in C_i} E(\mathbbm{1}_{u,v}) \geq \\
	& \geq \sum_{u\in C_i} \frac{2\tau}{k} = k\cdot\frac{2\tau}{k} = 2\tau \enspace .
	\end{split}
	\end{equation*}
	The indicator variables are independent, as they refer to decisions of distinct nodes.
	By applying a Chernoff bound, we get
	\begin{equation*}
	\begin{split}
	\Pr[X_{v}\leq(1-\delta)\mu] & \leq \exp\left(-\delta^2\frac{\mu}{2}\right) \leq \exp\left(-\delta^2\frac{2\tau}{2}\right) = \\
	& = \exp\left(-\delta^2\alpha\log n\right) < \frac{1}{n^{\alpha\delta^2}} \enspace .
	\end{split}
	\end{equation*}
	By setting $\delta=\frac{3}{4}$, we get that a message $m_v$ is selected fresh in at least $T$ nodes $u\in C_i$ with probability at least $1- \frac{1}{n^{9\alpha/16}}$.
	By a union bound, this holds for every node $v$ with probability at least $1-\frac{1}{n^{9\alpha/16-1}}$.
	\qed
\end{proof}

To match the modification of the algorithm, we show that the four properties now hold for $p$ with probability at least $1-\left(\frac{2}{n^{\alpha/48-1}}+\frac{2p}{n^{d-2}}+\frac{p}{n^{9\alpha/16-1}}\right)$.
To prove the new induction step, we make similar assumptions as earlier when proving the induction step, i.e., all events described in the four properties for $p-1$, and in \autoref{lemma:h_send} for $p-1$ and $p$, occur.
In addition, we assume that events described in \autoref{lemma:shuffle_threshold} for $p-1$, occur.
In total, this happens with probability at least
\begin{equation*}
\begin{split}
&1-\left(\frac{2}{n^{\alpha/48-1}}+\frac{2(p-1)}{n^{d-2}}+\frac{p}{n^{9\alpha/16-1}}+\frac{1}{n^{d-2}}+\frac{1}{n^{d-2}}\right)= \\
&1-\left(\frac{2}{n^{\alpha/48-1}}+\frac{2p}{n^{d-2}}+\frac{p}{n^{9\alpha/16-1}}\right) \enspace .
\end{split}
\end{equation*}

\begin{proof}[Extension of induction step for property 1]
	The property holds at the end of ranking phase $p-1$.
	At the beginning of shuffle phase $p-1$, each pioneer message in a clique is known to exactly one node in the clique.
	Thus, at the end of the shuffle phase, the $phasecnt$ values for pioneer messages are at most 2 (one reception is from the respective node within the same clique, and the other is from the neighbor from the neighboring clique).
	In conclusion, all pioneer messages are filtered out, so there are no pioneer messages at the beginning of ranking phase $p$, which completes the proof.
	\qed
\end{proof}

\begin{proof}[Extension of induction step for property 2]
	The property holds at the end of ranking phase $p-1$.
	At the beginning of shuffle phase $p-1$,
	considering one direction of flow,
	all fresh messages $m_v$ in nodes of clique $C_i$
	originate at nodes $C_{i-p}$, except for pioneers (originating at nodes in $C_{i-p-1}$).
	At the end of shuffle phase $p-1$, as already discussed, all pioneer messages are filtered out due to low $phasecnt$ values.
	By property 4 for the end of ranking phase $p-1$,
	all messages $m_v\notin C_{i-p}$ are non-fresh, so they are filtered out (if any) for being non-fresh prior to the start of shuffle phase $p-1$.
	Thus, in total, considering both directions, at the end of shuffle phase $p-1$, each node selects $4 \tau$ of the messages $m_v, v\in C_{i-p} \cup C_{i+p}$, marks them fresh and ranks them $1$ to $4\tau$.
	This completes the proof.
	\qed
\end{proof}

\begin{proof}[Extension of induction step for property 3]
	Properties 2 and 3 hold at the end of ranking phase $p-1$,
	i.e., at the beginning of shuffle phase $p-1$, for every node $v\in C_{i-p}$, it holds that $m_v$ is fresh in at least $T$ nodes $u\in C_i$,
	and that every node in $C_i$ has at most $4\tau$ fresh messages per direction.
	During the shuffle phase, every message $m_v, v\in C_{i-p},$ is sent at least $T$ times by nodes of $C_i$, and therefore is not filtered out at the end of the shuffle phase.
	By \autoref{lemma:shuffle_threshold} for $p-1$, each message is selected and becomes fresh in at least $T$ nodes,
	which completes the proof.
	\qed
\end{proof}

\begin{proof}[Extension of induction step for property 4]
	The original proof of property 4 for $p$ shown in the previous section relies on
	property 4 at the end of ranking phase $p-1$,
	on Properties 2 and 3 at the beginning of ranking phase $p$, and on \autoref{lemma:h_send} for $p$.
	At this point, all of them are proved.
	Thus, the same original proof for property 4 applies directly.
	
	In other words,
	by property 4 for $p-1$ and $i$,
	for every node $u\in C_i$, every node $v$ such that $v\in C_j$ for some $i-p+1\leq j \leq i$,
	it holds that $m_v\in R_u(\bar{t}_{p-1})$, and $m_v$ is non-fresh.
	Notice that shuffle phases preserve this.
	By property 3 for $p$ and $i$,
	at the beginning of ranking phase $p$, every message $m_v$, $v\in C_{i-p}$, is fresh in at least $T$ nodes $u\in C_i$.
	By property 2 for $p$ and $i$,
	at the beginning of ranking phase $p$, every node $u\in C_i$ has at most $8 \tau$ fresh messages.
	By \autoref{lemma:h_send} for $p$,
	all are sent during ranking phase $p$.
	This means that every message $m_v$, $v\in C_{i-p}$, is sent by at least $T$ nodes of the clique $C_i$.
	This implies that every message $m_v, v\in C_{i-p}$, is received by every node $u\in C_i$ at least $T$ times.
	Thus, at the end of ranking phase $p$, every message $m_v$, $v\in C_{i-p}$ is known and non-fresh in all nodes $u\in C_i$,
	which completes the proof.
	\qed
\end{proof}

This completes the proof.
Recall that we assumed that all events described in the four properties for $p-1$, in \autoref{lemma:h_send} for $p-1$ and $p$, and in \autoref{lemma:shuffle_threshold} for $p-1$, occur.
Thus, the properties are proved with probability at least $1-\left(\frac{2}{n^{\alpha/48-1}}+\frac{2p}{n^{d-2}}+\frac{p}{n^{9\alpha/16-1}}\right)$.

Assigning $p=n/k$ in Property 4 proves \autoref{lemma:iteration}, from which \autoref{thm:main2} follows, with
probability at least
\begin{equation*}
\begin{split}
&1-\left(\frac{2}{n^{\alpha/48-1}}+\frac{2n/k}{n^{d-2}}+\frac{n/k}{n^{9\alpha/16-1}}\right)\geq  \\
&1-\left(\frac{1}{n^{d-3}}+\frac{1}{n^{\alpha/48-2}}+\frac{1}{n^{9\alpha/16-2}}\right) \geq 1-\frac{1}{n^c} \enspace ,
\end{split}
\end{equation*}
for a constant $c$, by fixing $d$ and $\alpha$ to values $d>c+3, \alpha >48c+96$.

\begin{figure}
	\centering
	\includegraphics[width=300pt]{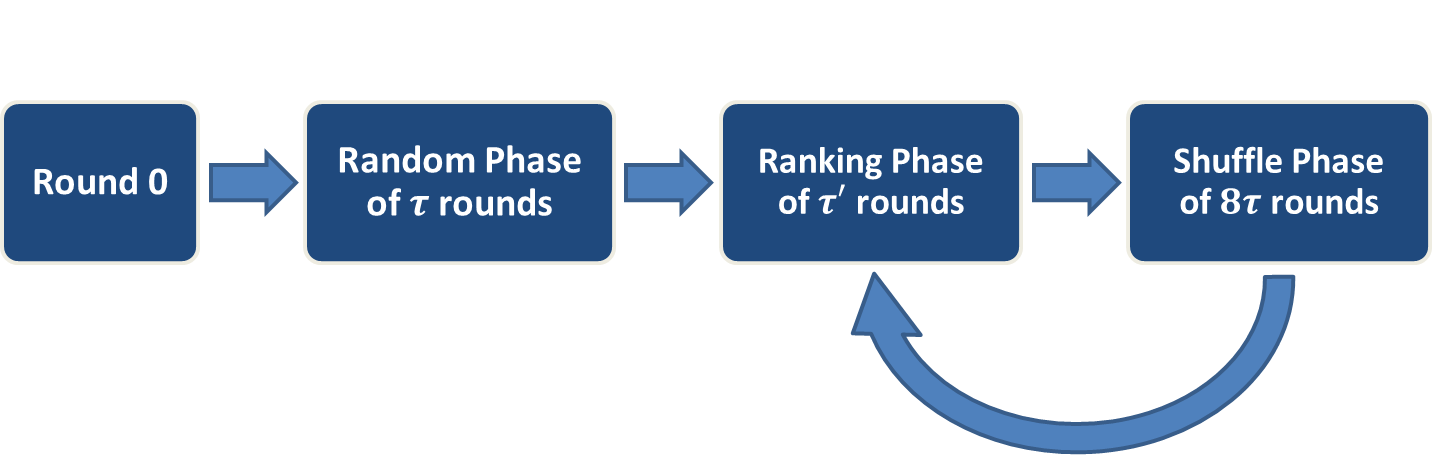}
	\caption{Phases of \autoref{alg:shuffle}.}
	\label{fig:phases_2}
\end{figure}

\begin{lemma-repeat}{lemma:nonfaulty_clique}
	
\end{lemma-repeat}
\begin{proof}

Let $\mathbbm{1}_{u}$, for every node $u$, be an indicator variable that indicates whether node $u$ is non-faulty after ${\tau_{e}}$ rounds, or not.
Then
\begin{equation*}
\begin{split}
\Pr[\mathbbm{1}_{u}=1] & = (1-q)^{\tau_{e}}\geq 1-q{\tau_{e}}\geq 1-1/32=31/32 \enspace .
\end{split}
\end{equation*}
Let $X_{i}=\sum_{u\in C_i} \mathbbm{1}_{u}$, for every $i, 1\leq i\leq n/k$, be the number of non-faulty nodes in $C_i$ after ${\tau_{e}}$ rounds.
Then
\begin{equation*}
\begin{split}
\mu =E(X_{i}) &=E\left(\sum_{u\in C_i} \mathbbm{1}_{u}\right)=\sum_{u\in C_i} E(\mathbbm{1}_{u}) \geq \\
& \geq \sum_{u\in C_i} 31/32 = 31k/32 \enspace .
\end{split}
\end{equation*}
The indicator variables are independent, as failure events of nodes are independent.
By applying a Chernoff bound, with $\delta=\frac{1}{31}$, we get
\begin{equation*}
\begin{split}
\Pr\left[X_{i}<\frac{30}{32}k\right] & \leq \Pr\left[X_{i}\leq(1-\delta)\frac{31}{32}k \right] \leq \Pr[X_{i}\leq(1-\delta)\mu] \leq \exp\left(-\delta^2\frac{\mu}{2}\right) \leq \\
& \leq \exp\left(-\delta^2\frac{31k}{2\cdot32}\right) \leq \exp\left(-\delta^2\frac{31(2\cdot32\cdot31^2\log n)}{2\cdot32}\right) < \\
& < \frac{1}{n^{31^3\delta^2}} \enspace .
\end{split}
\end{equation*}
The inequality in second line holds because $k=\Omega(\log^3n)$.
We get that at the end of round ${\tau_{e}}$, the number of non-faulty nodes in a clique is at least $(30k/32)$ with probability at least $1- \frac{1}{n^{31}}$.
By a union bound, this holds for every clique with probability at least $1-\frac{1}{n^{30}}$.
\qed
\end{proof}

\begin{theorem-repeat}{thm:main2}
	
\end{theorem-repeat}
\begin{proof}
	Fix $i, p$.
	Let $m_v$ be a message that is fresh in at least $T$ (non-faulty) nodes in $C_{i-1}$ at the end of shuffle phase $p-1$.
	Here we analyze the probability
	that $m_v$ is \emph{not} shuffled successfully in clique $C_i$.
	An unsuccessful shuffle might occur either because the $phasecnt$ values in $C_i$ at the end of shuffle phase $p$ are smaller than the threshold of $T^{*}=\hat{c}T$, so the message is filtered out (denote this event by $A$), or because the message was selected by less than $T$ (non-faulty) nodes.
	By~\autoref{lemma:h_send}, at the beginning of shuffle phase $p$, the message $m_v$ is supposed to be fresh in at least $T$ nodes in $C_i$ (each of them gets the message from its respective neighbor in $C_{i-1}$).
	Of these nodes in $C_i$, if one does not send $m_v$ during shuffle phase $p$, then either the node or its neighbor in $C_{i-1}$ (or both) becomes faulty by the end of shuffle phase~$p$.
	The probability, $\hat{q}$, for such a pair of nodes \emph{not} to fail is bounded from below (according to Bernoulli's inequality) by $\hat{q} = ((1-q)^{{\tau_{e}}})^2 \geq (1-q{\tau_{e}})^2\geq 1-2q{\tau_{e}}\geq1-1/16$.
	Fix a set of $T$ pairs of nodes $S(m_v)\subseteq C_{i-1} \times C_i$, of those who know message $m_v$ in $C_{i-1}$ at the end of shuffle phase $p-1$, and their respective neighbors in $C_i$.
	There might exist more than $T$ such pairs, but by fixing a set of size $T$ and ignoring the rest, we bound the probability of an unsuccessful shuffle from above, as the ignored nodes can only help and increase the probability of success.
	A ``surviving'' pair is a pair of nodes from $S(m_v)$ where both are non-faulty at the end of the shuffle phase, and hence function properly (by sending message $m_v$) during shuffle phase $p$.
	Denote by $s$, the number of ``surviving'' pairs.
	We have that
	\begin{equation*}
	\begin{split}
	\Pr[A] & \leq \sum\limits_{s=0}^{T^{*}-1} {{T \choose s} \cdot(\hat{q})^s \cdot (1-\hat{q})^{T-s}} \leq \sum\limits_{s=0}^{T^{*}-1} {{T \choose s} \cdot (1-\hat{q})^{T-s}} \leq \\
	& \leq \sum\limits_{s=0}^{T^{*}-1} {{T \choose s} \cdot (1/16)^{T-s}} \enspace .
	\end{split}
	\end{equation*}
	We sum over all $s\in\{0 , \ldots, T^{*}-1 \}$, where the number of ``survivors'' is lower than the threshold of~$\hat{c}T$, which implies that the message $m_v$ is filtered out, improperly, at the end of the shuffle phase due to a low $phasecnt$ value.
	By setting $0 < \hat{c}\leq\frac{1}{2}$, we get the following,
	\begin{equation*}
	\begin{split}
	\Pr[A] & \leq T^{*} \cdot {T \choose T/2 } \cdot  (1/16)^{T/2} \leq \hat{c}T \cdot \left(\frac{T\cdot e}{T/2}\right)^{T/2} \cdot (1/16)^{T/2} \leq \\
	& \leq T/2 \cdot \left((2e)^\frac{1}{2}\right)^{T}\cdot (1/16)^{T/2} \leq {\frac{1}{4}\alpha\log n} \cdot \left((2e)^\frac{1}{4}\right)^{\alpha\log n}\cdot \left(\frac{1}{2^{4}}\right)^{\frac{1}{4}\alpha\log n} \leq \\
	& \leq n \cdot \left(2^\frac{2}{3}\right)^{\alpha\log n}\cdot \left(\frac{1}{2^{4}}\right)^{\frac{1}{4}\alpha\log n} \leq n \cdot n^{\frac{2}{3}\alpha}\cdot \left(\frac{1}{n^{4}}\right)^{\frac{1}{4}\alpha} \leq n^{\frac{2}{3}\alpha+1}\cdot \frac{1}{n^{\frac{4}{4}\alpha}} = \\
	& = \frac{1}{n^{\alpha-\frac{2}{3}\alpha-1}} = \frac{1}{n^{\alpha/3-1}} \enspace .
	\end{split}
	\end{equation*}
	Namely, the message is not filtered out with probability at least $\frac{1}{n^{\alpha/3-1}}$.
	The number of non-faulty nodes in each clique is at least $31k/32$ with probability at least $1-\frac{1}{n^{30}}$, by~\autoref{lemma:nonfaulty_clique}.
	An analysis similar to the one in the proof of~\autoref{lemma:shuffle_threshold} (with $\delta=11/15$) gives that,
	once the message is not filtered out, it is selected by at least $T$ of the non-faulty nodes in $C_i$ with probability at least $1-\frac{1}{n^{11^2\alpha/(15\cdot16)}}$.
	In total, by using a union bound, a message is not shuffled successfully between two consecutive shuffle phases with probability at most $\frac{1}{n^{\alpha/3-1}}+\frac{1}{n^{11^2\alpha/(15\cdot16)}}+\frac{1}{n^{30}} \leq \frac{1}{n^{6}}$ (for value of $\alpha$ fixed earlier).
	
	We use union bound two more times, for all messages and for all phases, and get an upper bound for the probability that a message is not propagated properly, of $\frac{1}{n^{4}}$.
	This proves that the algorithm tolerates failures that occur with probability $0\leq q\leq \frac{1}{32{\tau_{e}}}$ in the given model, with probability at least $1-\frac{1}{n^{4}}$.
	\qed
\end{proof}

\end{document}